\documentclass{article}
%%%%%%%%%%%%%%%%%%%%%%%%%%%%%%%%%%%%%%%%%%%%%%%%%%%%%%%%%%%%%%%%%%%%%%%%%%%%%%%%%%%%%%%%%%%%%%%%%%%%%%%%%%%%%%%%%%%%%%%%%%%%%%%%%%%%%%%%%%%%%%%%%%%%%%%%%%%%%%%%%%%%%%%%%%%%%%%%%%%%%%%%%%%%%%%%%%%%%%%%%%%%%%%%%%%%%%%%%%%%%%%%%%%%%%%%%%%%%%%%%%%%%%%%%%%%
\usepackage{amsfonts}
\usepackage{amssymb}
\usepackage{graphicx}
\usepackage{amsmath}
\usepackage{amsthm}
\usepackage{tikz}

\setcounter{MaxMatrixCols}{10}
%TCIDATA{OutputFilter=LATEX.DLL}
%TCIDATA{Version=5.50.0.2953}
%TCIDATA{Codepage=936}
%TCIDATA{<META NAME="SaveForMode" CONTENT="1">}
%TCIDATA{BibliographyScheme=Manual}
%TCIDATA{Created=Sun Jun 30 12:06:56 2002}
%TCIDATA{LastRevised=Monday, June 22, 2015 12:49:12}
%TCIDATA{<META NAME="ViewSettings" CONTENT="0">}
%TCIDATA{<META NAME="GraphicsSave" CONTENT="32">}
%TCIDATA{<META NAME="DocumentShell" CONTENT="Journal Articles\Standard LaTeX Article">}
%TCIDATA{Language=American English}
%TCIDATA{CSTFile=LaTeX article (bright).cst}

\voffset=-2cm \textheight=23cm \hoffset=-1.5cm \textwidth=15cm

\nonstopmode
\newtheorem{theorem}{Theorem}[section]
\newtheorem{proposition}[theorem]{Proposition}
\newtheorem{lemma}[theorem]{Lemma}

\newtheorem{claim}[theorem]{Claim}

\newtheorem{corollary}[theorem]{Corollary}

\theoremstyle{definition}
\newtheorem{definition}[theorem]{Definition}

\theoremstyle{remark}

\newtheorem{remark}[theorem]{Remark}

\begin{document}

\title{Model-free Superhedging Duality }
\author{Matteo Burzoni, Marco Frittelli, Marco Maggis \thanks{%
email: matteo.burzoni@unimi.it, marco.frittelli@unimi.it,
marco.maggis@unimi.it}}
\maketitle

\begin{abstract}
In a model free discrete time financial market, we prove the superhedging
duality theorem, where trading is allowed with dynamic and semi-static
strategies. We also show that the initial cost of the cheapest portfolio
that dominates a contingent claim on every possible path $\omega \in \Omega $%
, might be strictly greater than the upper bound of the no-arbitrage prices.
We therefore characterize the subset of trajectories on which this duality
gap disappears and prove that it is an analytic set.
\end{abstract}

\noindent \textbf{Keywords}: Superhedging Theorem, Model Independent Market,
Model Uncertainty, Robust Duality, Finite Support Martingale Measure,
Analytic Sets.

\noindent \textbf{MSC (2010):} 60B05, 60G42, 28A05, 28B20, 46A20, 91B70,
91B24.

\section{Introduction\label{Intro}}

The aim of this article is the proof of the following discrete time, model
independent version of the superhedging theorem.

\begin{theorem}[Superhedging]
\label{superH}Let $g:\Omega \mapsto \mathbb{R}$ be an $\mathcal{F}$%
-measurable random variable. Then
\begin{eqnarray*}
&&\inf \left\{ x\in \mathbb{R}\mid \exists H\in \mathcal{H}\text{ such that }%
x+(H\cdot S)_{T}\geq g\ \mathcal{M}\text{-q.s.}\right\} \\
&=&\inf \left\{ x\in \mathbb{R}\mid \exists H\in \mathcal{H}\text{ such that
}x+(H\cdot S)_{T}(\omega )\geq g(\omega )\ \forall \omega \in \Omega _{%
\mathcal{\ast }}\right\} \\
&=&\sup_{Q\in \mathcal{M}_{f}}E_{Q}[g]=\sup_{Q\in \mathcal{M}}E_{Q}[g],
\end{eqnarray*}%
where
\begin{equation}
\Omega _{\mathcal{\ast }}:=\left\{ \omega \in \Omega \mid \exists Q\in
\mathcal{M}\text{ s.t. }Q(\omega )>0\right\} .  \label{omega*}
\end{equation}
\end{theorem}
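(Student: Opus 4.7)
I would prove the four-way equality by decomposing it into one chain of ``easy'' inequalities plus a single non-trivial strong-duality step. Write $\pi_1$ for the q.s.\ superhedging price (first line of the statement) and $\pi_2$ for the pointwise-on-$\Omega_{\ast}$ superhedging price (second line). The following four bounds follow from general principles: (i)~$\sup_{Q\in\mathcal{M}_f}E_Q[g]\leq \sup_{Q\in\mathcal{M}}E_Q[g]$ because $\mathcal{M}_f\subset \mathcal{M}$; (ii)~weak duality $\sup_{Q\in\mathcal{M}}E_Q[g]\leq \pi_1$, obtained by taking the $Q$-expectation of $x+(H\cdot S)_T\geq g$ for $Q\in\mathcal{M}$ and using that $H\cdot S$ is a $Q$-martingale (or at worst a $Q$-supermartingale under the admissibility built into $\mathcal{H}$); (iii)~$\pi_2\leq \pi_1$, because a q.s.\ superhedge cannot fail at any $\omega\in\Omega_{\ast}$ without some $Q\in\mathcal{M}$ charging this very point, by \eqref{omega*}; (iv)~$\sup_{Q\in\mathcal{M}_f}E_Q[g]\leq \pi_2$, since every $Q\in\mathcal{M}_f$ is carried by $\Omega_{\ast}$ so weak duality on $\Omega_{\ast}$ applies. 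Everything therefore collapses as soon as one proves the strong-duality inequality $\pi_1\leq \sup_{Q\in\mathcal{M}_f}E_Q[g]$.

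For this strong-duality step the plan is a backward-induction / dynamic-programming argument on $\Omega_{\ast}$, running from $t=T$ down to $t=0$. At a node $\omega|_t$ of the tree, let $\Sigma_t(\omega|_t)\subset \mathbb{R}^d$ denote the set of next-step price increments compatible with $\Omega_{\ast}$. The one-period superhedging price of an upper semianalytic payoff $U_{t+1}$ at $\omega|_t$ is, by a finite-dimensional Hahn--Banach / bipolar argument, the support function of the convex hull of the admissible increments, and hence coincides with $\sup_\mu E_\mu[U_{t+1}]$ taken over probability measures $\mu$ on $\Sigma_t(\omega|_t)$ whose barycenter equals $S_t(\omega|_t)$. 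By Carath\'eodory's theorem the $\varepsilon$-optimisers of this supremum can be chosen concentrated on at most $d+2$ atoms. Iterating this one-step duality from $T$ downwards and concatenating the local kernels yields a globally finitely supported $Q\in\mathcal{M}_f$ with $E_Q[g]$ as close as desired to $\pi_1$.

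The main obstacle is measurability. Because the framework is model-free, there is no dominating reference measure on $\Omega$, so the backward-recursive value functions and the one-step $\varepsilon$-optimisers must be selected in a jointly measurable way along $\omega$. This is precisely where the result---stated earlier in the paper---that $\Omega_{\ast}$ is analytic becomes essential: the multifunction $\omega|_t\mapsto \Sigma_t(\omega|_t)$ then has analytic graph, the dynamic-programming operator preserves upper semianalyticity of the value functions, and the Jankov--von Neumann selection theorem furnishes universally measurable strategies $H_t$ and finitely supported conditional kernels realising (up to $\varepsilon$) the one-step supremum. Pasting these selections along the tree simultaneously produces an admissible $H\in\mathcal{H}$ attaining $\pi_2$ and an approximating sequence $Q_n\in\mathcal{M}_f$ with $E_{Q_n}[g]\to \pi_1$, forcing all four quantities to coincide.
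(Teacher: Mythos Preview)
Your backward-recursion plan for the strong-duality step is close in spirit to what the paper does in its key Proposition (the one establishing $\pi_\ast(g)=\sup_{Q\in\mathcal{M}_f}\pi_Q(g)$ and $\mathcal{C}=\bigcap_{Q\in\mathcal{M}_f}\mathcal{C}(Q)$): one builds, level-set by level-set, an $\mathcal{F}_t$-measurable value process $X_t$ and shows it coincides with $\sup_{Q\in\mathcal{M}_f}X_t(Q)$, using the analyticity of $\Omega_\ast$ to obtain the required measurable selectors. One difference worth noting: the paper does not paste one-step finitely supported kernels into a global $Q\in\mathcal{M}_f$ (which would require checking that the pasted measure is actually carried by $\Omega_\ast$). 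Instead, at each node it picks finitely many points $\omega_1,\dots,\omega_k\in\Sigma_t^\omega\cap\Omega_\ast$ spanning the relevant convex set $\Gamma_y$ and then invokes directly the characterisation $\Omega_\ast=\{\omega:\exists\,Q\in\mathcal{M}_f,\ Q(\omega)>0\}$ (together with convexity of $\mathcal{M}_f$) to produce a \emph{global} $Q\in\mathcal{M}_f$ already charging those points.

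There is, however, a genuine gap in your reduction scheme: inequality~(ii). In this paper $\mathcal{H}$ consists of \emph{all} $\mathbb{F}$-predictable $\mathbb{R}^d$-valued processes, with no admissibility constraint whatsoever, and $g$ is an arbitrary $\mathcal{F}$-measurable function. For a general $Q\in\mathcal{M}$ the process $(H\cdot S)$ is only a $Q$-local martingale, and your justification ``at worst a $Q$-supermartingale under the admissibility built into $\mathcal{H}$'' invokes a hypothesis that is simply absent here; taking $Q$-expectations of $x+(H\cdot S)_T\geq g$ is not immediate. The paper never attempts this weak-duality inequality for $Q\in\mathcal{M}\setminus\mathcal{M}_f$. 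It proves $\sup_{Q\in\mathcal{M}}E_Q[g]=\sup_{Q\in\mathcal{M}_f}E_Q[g]$ by a completely different device: if some $\widetilde{Q}\in\mathcal{M}$ had $l<E_{\widetilde{Q}}[g]<\infty$ with $l:=\sup_{\mathcal{M}_f}E_Q[g]$, one enlarges the market by the additional asset $S^{d+1}_t:=E_{\widetilde{Q}}[g\mid\mathcal{F}_t]$; since $\widetilde{Q}$ is a martingale measure for the enlarged market, the equivalence $\mathcal{M}\neq\varnothing\Leftrightarrow\mathcal{M}_f\neq\varnothing$ applied there yields a finite-support martingale measure $Q_f$ (automatically in the original $\mathcal{M}_f$) with $E_{Q_f}[g]=S^{d+1}_0>l$, a contradiction. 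A further truncation by $g\wedge n$ disposes of the case $E_{\widetilde{Q}}[g]=+\infty$. This market-extension trick is the ingredient your outline is missing.
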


We adopt the following setting and notations: let $\Omega $ be a Polish
space and $\mathcal{F}=\mathcal{B}(\Omega )$ be the Borel sigma-algebra; $%
T\in \mathbb{N}$, $I:=\left\{ 0,...,T\right\} $, $S=(S_{t})_{t\in I}$ be an $%
\mathbb{R}^{d}$-valued stochastic process on $(\Omega ,\mathcal{F})$
representing the price process of $d\in \mathbb{N}$ assets; $\mathcal{P}$ be
the set of all probability measures on $(\Omega ,\mathcal{F})$;$\ \mathbb{F}%
^{S}:=\{\mathcal{F}_{t}^{S}\}_{t\in I}$ be the natural filtration and $%
\mathbb{F}:=\{\mathcal{F}_{t}\}_{t\in I}$ be the Universal Filtration,
namely
\begin{equation*}
\mathcal{F}_{t}:=\bigcap_{P\in \mathcal{P}}\mathcal{F}_{t}^{S}\vee \mathcal{N%
}_{t}^{P},\text{ where }\mathcal{N}_{t}^{P}=\{N\subseteq A\in \mathcal{F}%
_{t}^{S}\mid P(A)=0\}\text{;}
\end{equation*}%
$\mathcal{H}$ be the class of $\mathbb{F}$-predictable stochastic processes,
with values in $\mathbb{R}^d$, representing the family of admissible trading
strategies; $(H\cdot
S)_{T}:=\sum_{t=1}^{T}\sum_{j=1}^{d}H_{t}^{j}(S_{t}^{j}-S_{t-1}^{j})=%
\sum_{t=1}^{T}H_{t}\cdot \Delta S_{t}$ be the gain up to time $T$ from
investing in $S$ adopting the strategy $H$. We denote
\begin{eqnarray*}
\mathcal{M}:= &&\left\{ Q\in \mathcal{P}\mid S\text{ is an }\mathbb{F}\text{%
-martingale under }Q\right\} , \\
\mathcal{P}_{f}:= &&\left\{ Q\in \mathcal{P}\mid \text{supp}(Q)\text{ is
finite}\right\} , \\
\mathcal{M}_{f}:= &&\mathcal{M\cap P}_{f},
\end{eqnarray*}%
where the support of $P\in \mathcal{P}$ is defined by $\text{supp}%
(P)=\bigcap \{C\in \mathcal{F}\mid C$ closed, $P(C)=1\}$. The family of $%
\mathcal{M}$-polar sets is given by $\mathcal{N}:=\left\{ N\subseteq A\in
\mathcal{F}\ \mid \ Q(A)=0\ \forall Q\in \mathcal{M}\right\} $ and a
property is said to hold \emph{quasi surely} (q.s.) if it holds outside a
polar set. We adopt the convention $\infty -\infty =-\infty $ for those
random variables $g$ whose positive and negative part is not integrable. We
are also assuming the existence of a numeraire asset $S_{t}^{0}=1$ for all $%
t\in I.$

\paragraph{Probability free set up.}

In the statement of the superhedging theorem there is no reference to any a
priori assigned probability measure and the notions of $\mathcal{M}$, $%
\mathcal{H}$ and $\Omega _{\ast }$ only depend on the measurable space $%
(\Omega ,\mathcal{F})$ and the price process $S$. In general the class $%
\mathcal{M}$ is not dominated.

We are not imposing any restriction on $S$ so that it may describe generic
financial securities (for examples, stocks and/or options). However, in the
framework of Theorem \ref{superH} the class $\mathcal{H}$ of admissible
trading strategies requires dynamic trading in all assets. In Theorem \ref%
{superHO} below we extend this setup to the case of semi-static
trading on a finite number of options.

As illustrated in Section \ref{example}, we explicitly show that the initial
cost of the cheapest portfolio that dominates a contingent claim $g$ on
\textit{every possible path}

\begin{equation}
\inf \left\{ x\in \mathbb{R}\mid \exists H\in \mathcal{H}\text{ such that }%
x+(H\cdot S)_{T}(\omega )\geq g(\omega )\ \forall \omega \in \Omega \right\}
\label{everywhere}
\end{equation}%
can be strictly greater than $\sup_{Q\in \mathcal{M}}E_{Q}[g]$, unless some
artificial assumptions are imposed on $g$ or on the market. In order to
avoid these restrictions on the class of derivatives, it is crucial to
select the correct set of paths (i.e. $\Omega _{\ast }$) where the
superhedging strategy can be efficiently employed.

\paragraph{On the set $\Omega _{\ast }$.}

In Theorem \ref{superH}, the pathwise model independent inequality in %
\eqref{everywhere}, is replaced with an inequality involving only those $%
\omega \in \Omega $ which are weighted by at least one martingale measure $%
Q\in \mathcal{M}$. In \cite{BFM14} (see also Proposition \ref{LemmaNOpolar})
it is shown the existence of the maximal $\mathcal{M}$-polar set $N_{\ast }$%
, namely a set $N_{\ast }\in \mathcal{N}$ containing any other set $N\in
\mathcal{N}$. Moreover
\begin{equation}
\Omega _{\ast }=(N_{\ast })^{C}.  \label{33}
\end{equation}%
The inequality $x+(H\cdot S)_{T}\geq g\ $\ $\mathcal{M}$-q.s. holds by
definition outside any $\mathcal{M}$-polar set and therefore it is
equivalent, thanks to (\ref{33}), to the inequality $x+(H\cdot S)_{T}(\omega
)\geq g(\omega )\ \forall \omega \in \Omega _{\mathcal{\ast }},$ which
justifies the first equality in Theorem \ref{superH}. The set $\Omega _{\ast
}$ can be equivalently determined (see Proposition \ref{LemmaNOpolar}) via
the set $\mathcal{M}_{f}$ of martingale measures with finite support, a
property that turns out to be crucial in several proofs.

We stress that we do not make any ad hoc assumptions on the discrete time
financial model and notice that $\Omega _{\ast }$ is determined only by $S$:
indeed the set $\mathcal{M}$ can be written also as $\mathcal{M}=\left\{
Q\in \mathcal{P}\mid S\text{ is an }\mathbb{F}^{S}\text{-martingale under }%
Q\right\} $. One of the main technical result of the paper is the proof that
the set $\Omega _{\ast }$ is an \textit{analytic set} (Proposition \ref%
{propAn}) and so our findings show that the natural setup for studying this
problem is $(\Omega ,S,\mathbb{F}$,$\mathcal{H})$ with $\mathbb{F}$ the
Universal filtration (which contains the analytic sets) and $\mathcal{H}$
the class of $\mathbb{F}$-predictable processes. We also point out that we
could replace any sigma-algebra $\mathcal{F}_{t}$ with the sub sigma-algebra
generated by the analytic sets of $\mathcal{F}_{t}^{S}$.

\paragraph{On Model Independent Arbitrage and the condition $\mathcal{M}\neq
\varnothing .$}

In case $\mathcal{M}=\varnothing $ then $\Omega _{\mathcal{\ast }%
}=\varnothing $ and the theorem is trivial, as each term in the equalities
of Theorem \ref{superH} is equal to $-\infty $, provided we convene that any
$\mathcal{M}$-q.s. inequalities hold true when $\mathcal{M}=\varnothing $.
\newline
For this reason we will assume without loss of generality $\mathcal{M}\neq
\varnothing ,$ and recall that this condition can be reformulated in terms
of absence of Model Independent Arbitrages. A Model Independent $\mathcal{H}$%
-Arbitrage consists in a trading strategy $H\in \mathcal{H}$ such that $%
(H\cdot S)_{T}(\omega )>0\ \forall \omega \in \Omega $. However, as shown in
\cite{BFM14} No Model Independent $\mathcal{H}$-Arbitrage is not sufficient
to guarantees $\mathcal{M}\neq \varnothing $. Indeed we need the stronger
condition of No Model Independent $\widetilde{\mathcal{H}}$-Arbitrage to
hold, where $\widetilde{\mathcal{H}}$ is a wider class of $\widetilde{%
\mathbb{F}}$-predictable stochastic processes for a suitable enlarged
filtration $\widetilde{\mathbb{F}}$. Hence the non trivial statement in
Theorem \ref{superH} (i.e. when $\mathcal{M}\neq \varnothing $) regards the
superhedging duality under No Model Independent $\widetilde{\mathcal{H}}$%
-Arbitrage.

\subsection{Superhedging with semi-static strategies on options and stocks.}

We now allow for the possibility of static trading in a finite number of
options. Let us add to the previous market $k$ options $\Phi =(\phi
^{1},...,\phi ^{k})$ which expires at time $T$ and assume without loss of
generality that they have zero initial cost. We assume that each $\phi ^{j}$
is an $\mathcal{F}$-measurable random variable. Define $h\Phi
:=\sum_{j=1}^{k}h^{j}\phi ^{j}$, $h\in \mathbb{R}^{k}$, and
\begin{equation}
\mathcal{M}_{\Phi }:=\{Q\in \mathcal{M}_{f}\mid E_{Q}[\phi ^{j}]=0\text{ }
\forall j=1,...,k\}=\{Q\in \mathcal{M}_{f}\mid E_{Q}[h\Phi ]=0\;\forall h\in
\mathbb{R}^{k}\},  \label{Mphi}
\end{equation}
which are the options-adjusted martingale measures, and
\begin{equation}
\Omega _{\Phi }:=\left\{ \omega \in \Omega \mid \exists Q\in \mathcal{M}
_{\Phi }\text{ s.t. }Q(\omega )>0\right\} \subseteq \Omega _{\ast }.
\label{omegaphi}
\end{equation}
We have by definition that for every $Q\in \mathcal{M}_{\Phi }$ the support
satisfies $supp(Q)\subseteq \Omega _{\Phi }$. We define the superhedging
price when semi-static strategies are allowed by
\begin{equation}
\pi _{\Phi }(g):=\inf \left\{ x\in \mathbb{R}\mid \exists (H,h)\in \mathcal{%
H }\times \mathbb{R}^{k}\text{ such that }x+(H\cdot S)_{T}(\omega )+h\Phi
(\omega )\geq g(\omega )\ \forall \omega \in \Omega _{\Phi }\right\} .
\label{piPhi}
\end{equation}
With the same methodology used in the proof of Theorem \ref{superH} we will
obtain in Section \ref{secOption} the superhedging duality with semi-static
strategies, under the assumption $\mathcal{M}_{\Phi }=\{Q\in\mathcal{M}%
_f\mid supp(Q)\subseteq \Omega_{\Phi}\}$ \footnote{%
We wish to thank J. Ob\l oj and Z. Hou for pointing out that this
hypothesis is necessary for the argument used in the proof of Theorem \ref%
{superHO}. We will show in a forthcoming paper (joint with J. Ob\l
oj and Z. Hou) that the result holds in full generality dropping
this hypothesis.}:

\begin{theorem}[Super-hedging with options]
\label{superHO}Let $g:\Omega \mapsto \mathbb{R}$ and $\phi ^{j}:\Omega
\mapsto \mathbb{R}$, $j=1,...,k,$ be $\mathcal{F}$-measurable random
variables. Then
\begin{equation*}
\pi _{\Phi }(g)=\sup_{Q\in \mathcal{M}_{\Phi }}E_{Q}[g].
\end{equation*}
\end{theorem}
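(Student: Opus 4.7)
First, weak duality ($\pi_\Phi(g) \geq \sup_{Q \in \mathcal{M}_\Phi} E_Q[g]$) is immediate. For any $Q \in \mathcal{M}_\Phi$ and any $(x,H,h)$ witnessing $x \geq \pi_\Phi(g)$, the inequality $x + (H\cdot S)_T + h\Phi \geq g$ holds on $\text{supp}(Q) \subseteq \Omega_\Phi$, hence $Q$-a.s. Finite support of $Q$ removes all integrability concerns; the martingale property of $S$ under $Q$ gives $E_Q[(H\cdot S)_T] = 0$, and the defining condition of $\mathcal{M}_\Phi$ in \eqref{Mphi} gives $E_Q[h\Phi] = 0$. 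Taking $E_Q$ of both sides therefore yields $x \geq E_Q[g]$.

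For the nontrivial direction, the plan is to re-run the methodology of the proof of Theorem \ref{superH} with $\Omega_\Phi$ in place of $\Omega_\ast$ and $\mathcal{M}_\Phi$ in place of $\mathcal{M}_f$. The hypothesis $\mathcal{M}_\Phi = \{Q \in \mathcal{M}_f : \text{supp}(Q) \subseteq \Omega_\Phi\}$ is the structural analog of the identification $\Omega_\ast = (N_\ast)^C$ from \eqref{33} and Proposition \ref{LemmaNOpolar}: it guarantees that a pointwise inequality on $\Omega_\Phi$ is equivalent to a $Q$-a.s. inequality for every $Q \in \mathcal{M}_\Phi$, and that the dual supremum over $\mathcal{M}_\Phi$ is approximated by finitely supported measures whose support sits inside $\Omega_\Phi$. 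Concretely, for any $h \in \mathbb{R}^k$, applying the adapted version of Theorem \ref{superH} to the stock-only super-hedging of $g - h\Phi$ on $\Omega_\Phi$ should yield
$$\inf\{x \in \mathbb{R} : \exists H \in \mathcal{H},\ x + (H\cdot S)_T(\omega) \geq (g - h\Phi)(\omega)\ \forall \omega \in \Omega_\Phi\} = \sup_{Q \in \mathcal{M}_\Phi} E_Q[g - h\Phi] = \sup_{Q \in \mathcal{M}_\Phi} E_Q[g],$$
where the first equality uses the hypothesis and the second uses $E_Q[h\Phi] = 0$ for $Q \in \mathcal{M}_\Phi$. The left-hand side is an upper bound for $\pi_\Phi(g)$ for every choice of $h$ (in particular $h = 0$ already suffices), so $\pi_\Phi(g) \leq \sup_{Q \in \mathcal{M}_\Phi} E_Q[g]$, and combining with weak duality gives equality.

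The main obstacle is justifying that the proof of Theorem \ref{superH} really does go through on $\Omega_\Phi$, which is analytic but typically not Polish and on which one no longer has a pre-existing class of martingale measures given abstractly. One must verify, in the restricted setting, the analogs of: the analytic-set property (Proposition \ref{propAn}) for $\Omega_\Phi$; the existence of a maximal $\mathcal{M}_\Phi$-polar set (Proposition \ref{LemmaNOpolar}); and the measurable-selection / backward finite-support approximation steps that power the duality argument, now carried out under the additional linear constraints $E_Q[\phi^j] = 0$. The hypothesis $\mathcal{M}_\Phi = \{Q \in \mathcal{M}_f : \text{supp}(Q) \subseteq \Omega_\Phi\}$ is the essential hinge for all of this: without it, a finitely supported martingale measure for $S$ living on $\Omega_\Phi$ could violate the option constraints, breaking the equivalence between the pointwise pathwise formulation on $\Omega_\Phi$ and the $\mathcal{M}_\Phi$-quasi-sure one, and preventing the reduction to the Theorem \ref{superH} machinery.
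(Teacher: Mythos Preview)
Your proposal is correct and follows essentially the same route as the paper: re-run the Theorem~\ref{superH}/Proposition~\ref{supsup} machinery on the analytic set $\Omega_\Phi$ to obtain the stock-only duality $\widetilde{\pi}_\Phi(g-h\Phi)=\sup_{\{Q\in\mathcal{M}_f:\,\mathrm{supp}(Q)\subseteq\Omega_\Phi\}}E_Q[g-h\Phi]$, invoke the hypothesis to identify this dual class with $\mathcal{M}_\Phi$, use $E_Q[h\Phi]=0$ to drop the $h$-dependence, and combine with the trivial inequality $\pi_\Phi(g)\leq\widetilde{\pi}_\Phi(g-h\Phi)$. The paper packages the last step as the two-sided Lemma~\ref{lemmag}, whereas you use only its easy direction together with weak duality; this is a harmless organizational difference, not a different method.
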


\subsection{Comparison with the related literature.}

In the classical case when a reference probability is fixed, this subject
was originally studied by El Karoui and Quenez \cite{ekq}; see also \cite{Ka}
and \cite{DS94} and the references cited therein.

\smallskip

In \cite{BN13} a superhedging theorem is proven in the case of a
non-dominated class of priors $\mathcal{P}^{\prime }\subseteq \mathcal{P}$.
The result strongly relies on two technical hypothesis: (i) The state space $%
\Omega $ has a product structure, $\Omega =\Omega _{1}^{T}$, where $\Omega
_{1}$ is a certain fixed Polish space and $\Omega _{1}^{t}$ is the $t$-fold
product space; (ii) The set of priors $\mathcal{P}^{\prime }$ is also
obtained as a collection of product measures $P:=P_{0}\otimes \ldots \otimes
P_{T}$ where every $P_{t}$ is a measurable selector of a certain random
class $\mathcal{P}_{t}^{\prime }\subseteq \mathcal{P}(\Omega _{1})$. $%
\mathcal{P}_{t}^{\prime }(\omega )$ represents the set of possible models
for the $t$-th period, given state $\omega $ at time $t$. An essential
requirement on $\mathcal{P}_{t}^{\prime }$ is that the graph($\mathcal{P}%
_{t}^{\prime }$) must be an analytic subset of $\Omega _{1}^{t}\times
\mathcal{P}(\Omega _{1})$. These assumptions are crucial in order to apply
the measurable selection and stochastic control arguments which lead to the
proof of the superhedging theorem. In our setting we do not impose
restrictions on the state space $\Omega $ so the result cannot be deduced
from \cite{BN13} for $\mathcal{P}^{\prime }=\mathcal{M}$. Moreover, even in
the case of $\Omega =\Omega _{1}^{T}$, the class of martingale probability
measures $\mathcal{M}$ is endogenously determined by the market and we do
not require that it satisfies any additional restrictions. Furthermore, the
techniques employed to deduce our version of the superhedging duality
theorem are completely different, as they rely on the results of \cite{BFM14}%
. Note that in the particular simple case of $\Omega:=(\mathbb{R}^{d})^T$
with $S$ the canonical process, from \cite{BFM14}, we have that $%
\Omega_{*}=\Omega$ and there are no $\mathcal{M}$-polar sets. We thus have
the equivalence between $\mathcal{P}$-q.s. and $\mathcal{M}$-q.s.
equalities. The superhedging Theorem of \cite{BN13} can be therefore applied
with $\mathcal{P}^{\prime }=\mathcal{P}$ and the two results coincide.

\smallskip

The relevance of the superhedging problem without any a priori specified set
of probability measures is revealed by the increasing amount of literature
on this topic. The problem has been studied as a particular case of a
Skorokhod Embedding Problem (see \cite{BHR01,CO11,Ho11}), following the
pioneering work \cite{Ho98} on robust hedging. The reformulation of the
superhedging duality in the framework of optimal mass transport led to
important results both in discrete and continuous time as in \cite{BHLP13,
DS13, DS14b, GHLT14, HLOST15, OH15, TT13}.

\smallskip

Different approaches are taken in \cite{AB13,Riedel}. In \cite{Riedel} the
continuity assumptions on the assets allow to embed the problem in the
linear programming framework and to obtain the desired equality in a one
period market. In \cite{AB13} from a model independent version of the
Fundamental Theorem of Asset Pricing they deduce the following superhedging
duality (Theorem 1.4 \cite{AB13})
\begin{equation}
\inf \left\{ x\in \mathbb{R}\mid \exists (H,h)\in \mathcal{H}\times \mathbb{R%
}^{k}\text{ s.t. }x+(H\cdot S)_{T}(\omega )+h\Phi (\omega )\geq g(\omega )\
\forall \omega \in \Omega \right\} =\sup_{Q\in \mathcal{M}_{\Phi }}E_{Q}[g].
\label{888}
\end{equation}%
They assume a discrete time market, with one dimensional canonical process $%
S $ on the path space $\Omega =[0,\infty )^{T}$ and an arbitrary (but non
empty) set of options on $S$ available for static trading. Theorem 1.4 in
\cite{AB13} relies on two additional technical assumptions: (i) The
existence of an option with super-linearly growing and convex payoff; (ii)
The upper semi-continuity of the claim $g$.

The example in Section \ref{example} shows that without the upper
semi-continuity of the claim $g$ the duality in (\ref{888}) fails and it
also points out that the reason for this is the insistence of superhedging
over the whole space $\Omega $, instead of over the relevant set of paths $%
\Omega _{\ast }$. Our result holds for a $d$-dimensional (not necessarily
canonical) process $S$ and does not necessitate the existence of any options.

\section{Aggregation results}

In this section we investigate when certain conditions (like superhedging or
hedging) which hold $Q$-a.s. for all $Q\in \mathcal{M}$, ensure the validity
of the correspondent pathwise conditions on $\Omega _{\ast }$.

For an arbitrary sigma-algebra $\mathcal{G}$ and for
$\mathcal{G}$-measurable random variables $X$ and $Y$, we write
$X>Y$ if $X(\omega )>Y(\omega )$ for all $\omega \in \Omega $.
When we specify $X>Y$ on a measurable set $A\subset \Omega $ it
means that $X(\omega )>Y(\omega )$ holds for all $\omega \in A$.
Similarly for $X\geq Y$ and $X=Y.$ We recall that absence of
classical arbitrage opportunities, with respect to a probability
$P\in \mathcal{P}$, is denoted by $NA(P)$. We set%
\begin{eqnarray*}
\mathcal{L}(\Omega ,\mathcal{G}):= &\{f:&\Omega \rightarrow \mathbb{R}\mid
\mathcal{G}\text{-measurable }\}, \\
\mathcal{L}(\Omega ,\mathcal{G})_{+}:= &\{f\in &\mathcal{L}(\Omega ,\mathcal{%
G})\mid f\geq 0\}.
\end{eqnarray*}
The linear space of attainable random payoffs with zero initial cost is
given by
\begin{equation*}
\mathcal{K}:=\{(H\cdot S)_{T}\in \mathcal{L}(\Omega ,\mathcal{F})\mid H\in
\mathcal{H}\}.
\end{equation*}%
Recall that the set of events supporting martingale measures $\Omega _{\ast
} $ is defined in (\ref{omega*}) and observe that the convex cones
\begin{eqnarray}
\mathcal{C}:= &\{f\in &\mathcal{L}(\Omega ,\mathcal{F})\mid f\leq k\text{ on
}\Omega _{\ast }\text{ for some }k\in \mathcal{K}\},  \label{10} \\
\mathcal{C}(Q):= &\{f\in &\mathcal{L}(\Omega ,\mathcal{F})\mid f\leq k\;Q%
\text{-a.s. for some }k\in \mathcal{K}\}.  \label{11}
\end{eqnarray}%
are related by $\mathcal{C}\subseteq \mathcal{C}(Q),$ if $Q\in \mathcal{M}.$

The main Theorem \ref{superH} relies on the following cornerstone
proposition that will be proved in Section \ref{proofProp}, as its proof
requires several technical arguments.

\begin{proposition}
\label{supsup} Let $g\in \mathcal{L}(\Omega ,\mathcal{F})$ and define
\begin{eqnarray}
\pi _{\ast }(g) &:&=\inf \left\{ x\in \mathbb{R}\mid \ \exists H\in \mathcal{%
H}\text{ s.t. }x+(H\cdot S)_{T}\geq g\quad \text{on }\Omega _{\ast }\right\}
\label{pi} \\
\pi _{Q}(g) &:&=\inf \left\{ x\in \mathbb{R}\mid \ \exists H\in \mathcal{H}%
\text{ s.t. }x+(H\cdot S)_{T}\geq g\quad Q\text{- a.s. }\right\} .
\label{piQ}
\end{eqnarray}%
Then
\begin{eqnarray}
\pi _{\ast }(g) &=&\sup_{Q\in \mathcal{M}_{f}}\pi _{Q}(g)  \label{piSup} \\
\mathcal{C} &=&\bigcap_{Q\in \mathcal{M}_{f}}\mathcal{C}(Q).
\label{intersection}
\end{eqnarray}%
In particular, if $\pi _{\ast }(g)<+\infty $ the infimum is a minimum.
\end{proposition}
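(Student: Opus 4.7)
The easy direction $\mathcal{C}\subseteq \bigcap_{Q\in \mathcal{M}_{f}}\mathcal{C}(Q)$ and $\pi_{\ast}(g)\geq \sup_{Q\in \mathcal{M}_{f}}\pi_{Q}(g)$ is immediate: for every $Q\in \mathcal{M}_{f}$ one has $\text{supp}(Q)\subseteq \Omega_{\ast}$, so a pathwise superhedge on $\Omega_{\ast}$ is automatically a $Q$-almost sure superhedge. The content of the proposition is the reverse: from superhedges $H_{Q}$ existing $Q$-a.s.\ for every $Q\in \mathcal{M}_{f}$, assemble a \emph{single} $\mathbb{F}$-predictable $H$ dominating $g$ pathwise on $\Omega_{\ast}$.

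I would approach this by backward induction on the horizon $T$, leaning on the characterization (from \cite{BFM14}) $\Omega_{\ast}=\bigcup_{Q\in \mathcal{M}_{f}}\text{supp}(Q)$ together with its one-period refinement. For $T=1$ the problem reduces, state by state at time $0$, to a finite-dimensional convex-analysis question on the slice of $\Omega_{\ast}$ above $\omega_{0}$: this slice is a (possibly uncountable) union of supports of one-step martingale measures; classical superhedging duality applies inside each finite support, and the aggregation amounts to showing that the cone of one-step hedged payoffs is closed on that slice, so that a claim dominated $Q$-a.s.\ for every such $Q$ is in fact dominated by some fixed one-step strategy. The inductive step then conditions at time $1$ and applies this one-step argument to the residual claim produced by the inductive hypothesis on horizon $T-1$.

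The \textbf{main obstacle} is the joint \emph{local-to-global aggregation and measurability} of the resulting strategies. Two ingredients are needed: (i) closedness of the relevant hedging cone on each slice, which permits the Hahn--Banach-style aggregation from the family $\{\mathcal{C}(Q)\}$ into a pathwise hedge; and (ii) assembly of the per-slice selectors into a genuinely $\mathbb{F}$-predictable process. The second point is where the analyticity of $\Omega_{\ast}$ (Proposition \ref{propAn}) enters, via a Jankov--von Neumann type measurable selection theorem that produces universally measurable, hence $\mathbb{F}$-predictable, selectors.

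Once $\mathcal{C}=\bigcap_{Q\in \mathcal{M}_{f}}\mathcal{C}(Q)$ is established, the identity $\pi_{\ast}(g)=\sup_{Q\in \mathcal{M}_{f}}\pi_{Q}(g)$ is an immediate translation, since $x+(H\cdot S)_{T}\geq g$ on $\Omega_{\ast}$ iff $g-x\in \mathcal{C}$. Attainment when $\pi_{\ast}(g)<+\infty$ follows by taking a decreasing sequence $x_{n}\downarrow \pi_{\ast}(g)$ with $g-x_{n}\in \mathcal{C}$ and invoking the closedness of $\mathcal{C}$ established above to conclude $g-\pi_{\ast}(g)\in \mathcal{C}$.
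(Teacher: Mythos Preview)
Your high-level architecture---backward recursion, reduction to one-step problems on level sets, then measurable selection to glue the per-slice strategies---matches the paper's proof. The paper constructs, for each $t$, an $\mathcal{F}_{t-1}$-measurable closed-valued multifunction $R_{t,X,D}$ of superhedging strategies, defines the value process $X_{t-1}=\inf\Pi_{x_{d+1}}(R_{t,X,D})$, and obtains the optimal $H_t$ as a measurable selector; the analyticity of $\Omega_\ast$ enters exactly where you say, to make these multifunctions $\mathcal{F}_{t-1}$-measurable. Your derivation of attainment from the intersection formula, using that each $\mathcal{C}(Q)$ is closed in $L^0(Q)$ and that $Q$ has finite support, is a clean alternative to the paper's direct compactness argument (Step~2 there shows the relevant constraint set $Ko(\omega)$ is compact by a no-arbitrage normalization).

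The genuine gap is your one-step aggregation. You write that ``the aggregation amounts to showing that the cone of one-step hedged payoffs is closed on that slice'' and invoke a ``Hahn--Banach-style'' argument, but you do not say closed in which sense, nor how closedness alone yields a \emph{witnessing} $Q\in\mathcal{M}_f$ for every price level below $X_t$. On an infinite slice there is no ambient topology in which a separation argument is readily available, and the bipolar description of $\mathcal{C}$ in the paper is a \emph{consequence} of this proposition, not an input. The paper's actual mechanism is geometric and finite-dimensional: on a level set $\Sigma_t^\omega\cap\Omega_\ast^{t+1}$ it considers, for $y<X_t(\omega)$, the set $\Gamma_y=\overline{\mathrm{conv}}\{[\Delta S_{t+1}(\tilde\omega);\,y-X_{t+1}(\tilde\omega)]:\tilde\omega\in\Sigma_t^\omega\cap\Omega_\ast^{t+1}\}\subset\mathbb{R}^{d+1}$, shows (after disposing of a ``replicable'' case) that $0\in\mathrm{int}\,\Gamma_{X_t-\varepsilon}$ for every small $\varepsilon>0$, extracts finitely many points $\omega_1,\dots,\omega_k$ whose images already have $0$ in the interior of their convex hull, and takes any $Q\in\mathcal{M}_f$ charging all of them. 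For this $Q$ no one-step strategy can superhedge $X_{t+1}$ from capital $X_t-\varepsilon$, hence $X_t(Q)\geq X_t-\varepsilon$; letting $\varepsilon\downarrow0$ gives $\sup_Q X_t(Q)=X_t$. This explicit construction of the obstructing $Q$ is the missing idea in your sketch; once you have it, the rest of your outline goes through.
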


\begin{corollary}
\label{aggregationSH}Let $g\in \mathcal{L}(\Omega ,\mathcal{F})$ and $x\in
\mathbb{R}$. If for every $Q\in \mathcal{M}_{f}$ there exists $H^{Q}\in
\mathcal{H}$ such that $x+(H^{Q}\cdot S)_{T}\geq g$ $Q$-a.s. then there
exists $H\in \mathcal{H}$ such that $x+(H\cdot S)_{T}(\omega )\geq g(\omega
) $ for every $\omega \in \Omega _{\ast }$.
\end{corollary}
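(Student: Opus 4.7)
The plan is to derive this corollary directly from Proposition \ref{supsup}, which already contains all of the non-trivial content. In fact two essentially identical routes are available, both very short.

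The first route I would take uses the aggregation identity \eqref{intersection}. I translate the hypothesis into the statement that $g-x \in \mathcal{C}(Q)$ for every $Q\in \mathcal{M}_f$: the existence of $H^Q\in \mathcal{H}$ with $x+(H^Q\cdot S)_T\geq g$ $Q$-a.s.\ means $g-x\leq (H^Q\cdot S)_T$ $Q$-a.s., and $(H^Q\cdot S)_T\in \mathcal{K}$, which is precisely the definition \eqref{11} of $\mathcal{C}(Q)$. Hence $g-x\in \bigcap_{Q\in \mathcal{M}_f}\mathcal{C}(Q)$, and \eqref{intersection} gives $g-x\in \mathcal{C}$. Unfolding \eqref{10}, there exists $H\in \mathcal{H}$ such that $g-x\leq (H\cdot S)_T$ on $\Omega_*$, i.e.\ $x+(H\cdot S)_T(\omega)\geq g(\omega)$ for every $\omega\in \Omega_*$, as required.

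Alternatively, and equivalently, I can argue through \eqref{piSup}: the hypothesis says $\pi_Q(g)\leq x$ for every $Q\in \mathcal{M}_f$, so by \eqref{piSup} we get $\pi_*(g)=\sup_{Q\in \mathcal{M}_f}\pi_Q(g)\leq x <+\infty$. The ``infimum is a minimum'' clause of Proposition \ref{supsup} then furnishes an $H\in \mathcal{H}$ achieving $\pi_*(g)+(H\cdot S)_T\geq g$ on $\Omega_*$, and since $x\geq \pi_*(g)$ the same $H$ witnesses the desired pathwise inequality on $\Omega_*$.

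Thus there is no real obstacle at the level of the corollary; it is a cosmetic reformulation of either \eqref{intersection} or of the pair ``equality \eqref{piSup} plus attainment'' from Proposition \ref{supsup}. The substantive difficulty, namely aggregating a family $\{H^Q\}_{Q\in\mathcal{M}_f}$ of $Q$-a.s.\ superhedging strategies into a single strategy $H\in\mathcal{H}$ that superhedges pathwise on $\Omega_*$, is entirely contained in the forthcoming proof of Proposition \ref{supsup}.
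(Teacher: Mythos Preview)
Your first route is exactly the paper's proof: translate the hypothesis into $g-x\in\mathcal{C}(Q)$ for all $Q\in\mathcal{M}_f$, apply \eqref{intersection} to get $g-x\in\mathcal{C}$, and unpack the definition. The alternative via \eqref{piSup} is also fine but is not the path the paper takes.
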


\begin{proof}
By assumption, $g-x\in \mathcal{C}(Q)$ for every $Q\in \mathcal{M}_{f}.$
From $\mathcal{C}=\bigcap_{Q\in \mathcal{M}_{f}}\mathcal{C}(Q)$ we obtain $%
g-x\in \mathcal{C}$.
\end{proof}

\begin{corollary}[Perfect hedge]
\label{replica} Let $g\in \mathcal{L}(\Omega ,\mathcal{F})$. If for every $%
Q\in \mathcal{M}_{f}$ there exists $H^{Q}\in \mathcal{H},x^{Q}\in \mathbb{R}$
such that $x^{Q}+(H^{Q}\cdot S)_{T}=g$ $Q$-a.s. then there exists $H\in
\mathcal{H},x\in \mathbb{R}$ such that $x+(H\cdot S)_{T}(\omega )=g(\omega )$
for every $\omega \in \Omega _{\ast }$, and $x^{Q}=x$ for every $Q\in
\mathcal{M}_{f}$.
\end{corollary}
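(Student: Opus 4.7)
The plan is to first show that the constants $x^Q$ appearing in the hypothesis are all equal to a common value $x$, and then to apply Corollary \ref{aggregationSH} twice---once to $g$ and once to $-g$---so as to squeeze $g$ between two pointwise inequalities on $\Omega_\ast$.

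For any $Q\in\mathcal{M}_f$, the support of $Q$ is finite, so $(H^Q\cdot S)_T$ takes finitely many values and is $Q$-integrable; the $Q$-martingale property of $S$ combined with the $\mathbb{F}$-predictability of $H^Q$ gives $E_Q[(H^Q\cdot S)_T]=0$. Taking $Q$-expectation in $x^Q+(H^Q\cdot S)_T=g$ therefore forces $x^Q=E_Q[g]$. To show this value is independent of $Q$, I pick $Q_1,Q_2\in\mathcal{M}_f$ and form $Q^\ast:=\tfrac12(Q_1+Q_2)$, which still belongs to $\mathcal{M}_f$ (finite support is clear and the martingale property is preserved under convex combinations). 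The hypothesis at $Q^\ast$ yields $x^{Q^\ast}+(H^{Q^\ast}\cdot S)_T=g$ $Q^\ast$-a.s., and since $Q_i\ll Q^\ast$ this identity persists $Q_i$-a.s.\ for $i=1,2$. Taking $Q_i$-expectation yields $E_{Q_1}[g]=x^{Q^\ast}=E_{Q_2}[g]$. Denote the common value by $x$.

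With this uniform $x$, the hypothesis $x+(H^Q\cdot S)_T\geq g$ $Q$-a.s.\ for every $Q\in\mathcal{M}_f$ is precisely the assumption of Corollary \ref{aggregationSH}, producing $H\in\mathcal{H}$ with $x+(H\cdot S)_T\geq g$ on $\Omega_\ast$. Rewriting the hypothesis as $-x+((-H^Q)\cdot S)_T\geq -g$ $Q$-a.s.\ and applying Corollary \ref{aggregationSH} to $-g$ with initial capital $-x$ yields $\widehat H\in\mathcal{H}$ with $-x+(\widehat H\cdot S)_T\geq -g$ on $\Omega_\ast$, that is, $x-(\widehat H\cdot S)_T\leq g$ on $\Omega_\ast$.

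Adding the two inequalities gives $((H+\widehat H)\cdot S)_T\geq 0$ on $\Omega_\ast$. For every $Q\in\mathcal{M}_f$, the martingale property forces $E_Q[((H+\widehat H)\cdot S)_T]=0$, so this nonnegative random variable vanishes on $\mathrm{supp}(Q)$ (every atom of a finitely supported $Q$ has strictly positive mass). By Proposition \ref{LemmaNOpolar}, $\Omega_\ast=\bigcup_{Q\in\mathcal{M}_f}\mathrm{supp}(Q)$, hence $((H+\widehat H)\cdot S)_T\equiv 0$ on $\Omega_\ast$; substituting this into the two one-sided bounds yields $x+(H\cdot S)_T=g$ pointwise on $\Omega_\ast$. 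The delicate step---and the only place where the probabilistic and pointwise formulations really talk to each other---is this final upgrade from $Q$-a.s.\ equality (for all $Q\in\mathcal{M}_f$) to pointwise equality on $\Omega_\ast$, enabled precisely by the $\mathcal{M}_f$-characterization of $\Omega_\ast$ from Proposition \ref{LemmaNOpolar}.
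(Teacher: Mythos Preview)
Your proof is correct. The overall architecture matches the paper's---first pin down a common $x$, then upgrade the $Q$-a.s.\ replications to a pointwise equality on $\Omega_\ast$ via Corollary~\ref{aggregationSH} and Proposition~\ref{LemmaNOpolar}---but the two tactical choices differ. For the constancy of $x^Q$, the paper argues by $NA(Q_i)$ that two replications on the same finite support must share the same initial capital; you instead take expectations to identify $x^Q=E_Q[g]$ directly, which is a bit cleaner. For the equality on $\Omega_\ast$, the paper applies Corollary~\ref{aggregationSH} \emph{once} (to $g$) and then compares the aggregated $H$ against each $H^Q$ on $\mathrm{supp}(Q)$ using $NA(Q)$; you apply Corollary~\ref{aggregationSH} \emph{twice} (to $g$ and to $-g$) and kill the sum $(H+\widehat H)\cdot S$ on $\Omega_\ast$ via the martingale property. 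Your symmetric double application is conceptually tidy; the paper's single application is slightly more economical since it never needs to manufacture the second strategy $\widehat H$.
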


\begin{proof}
Note first that, from the hypothesis, for every $Q\in \mathcal{M}_{f}$ there
exists $H^{Q}\in \mathcal{H}$, $x^{Q}\in \mathbb{R}$ such that $%
x^{Q}+(H^{Q}\cdot S)_{T}(\omega )=g(\omega )$ for every $\omega \in supp(Q)$%
. We first show that $x^{Q}$ does not depend on $Q$. Assume there exist $%
Q_{1},Q_{2}\in \mathcal{M}_{f}$ such that $x^{Q_{1}}<x^{Q_{2}}$. For every $%
\lambda \in (0,1)$ set $Q_{\lambda }:=\lambda Q_{1}+(1-\lambda )Q_{2}\in
\mathcal{M}_{f}$. Then there exist $H^{Q_{\lambda }}\in \mathcal{H}$ and $%
x^{Q_{\lambda }}\in \mathbb{R}$ such that $x^{Q_{\lambda }}+(H^{Q_{\lambda
}}\cdot S)_{T}(\omega )=g(\omega )$ for every $\omega \in supp(Q_{\lambda
})=supp(Q_{1})\cup supp(Q_{2})$. Therefore $x^{Q_{\lambda }}+(H^{Q_{\lambda
}}\cdot S)_{T}(\omega )=g(\omega )$ for every $\omega \in supp(Q_{i}),$ for
any $i=1,2,$ and from $NA(Q_{i})$ we necessarily have that $x^{Q_{\lambda
}}=x^{i}$.\newline
Since $x+(H^{Q}\cdot S)_{T}(\omega )=g(\omega )$ for every $\omega \in
supp(Q)$ we can apply Corollary \ref{aggregationSH} which implies the
existence of $H\in \mathcal{H}$ such that $x+(H\cdot S)_{T}(\omega )\geq
g(\omega )$ on $\Omega _{\ast }$. Moreover $x-x+((H-H^{Q})\cdot
S)_{T}(\omega )\geq g(\omega )-g(\omega )$ for every $\omega \in supp(Q)$
implies $((H-H^{Q})\cdot S)_{T}(\omega )\geq 0$ for every $\omega \in
supp(Q) $. Since $NA(Q)$ holds, we conclude $((H-H^{Q})\cdot S)_{T}(\omega
)=0$ for every $\omega \in supp(Q)$. Thus for every $Q\in \mathcal{M}_{f}$
we have $x+(H\cdot S)_{T}(\omega )=g(\omega )$ on $supp(Q)$ and hence the
thesis follows from Proposition 4.18 \cite{BFM14} (or Proposition \ref%
{LemmaNOpolar}).
\end{proof}

\begin{corollary}[Bipolar representation]
\label{bipolar}Let $\mathcal{C}$ be defined in (\ref{10}). Then
\begin{equation}
\mathcal{C}=\{g\in \mathcal{L}(\Omega ,\mathcal{F})\mid E_{Q}[g]\leq
0\;\forall \,Q\in \mathcal{M}_{f}\}  \label{dualrepresentation}
\end{equation}
\end{corollary}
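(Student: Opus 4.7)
The strategy is to exploit the intersection formula $\mathcal{C}=\bigcap_{Q\in \mathcal{M}_{f}}\mathcal{C}(Q)$ supplied by Proposition \ref{supsup}, which reduces the bipolar statement to a family of classical one-probability superhedging problems on finite sample spaces. So I would split the proof into the two set inclusions and treat each separately.

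For the inclusion $\mathcal{C}\subseteq \{g\mid E_{Q}[g]\leq 0\;\forall Q\in \mathcal{M}_{f}\}$, I would fix $g\in \mathcal{C}$, pick $k=(H\cdot S)_{T}\in \mathcal{K}$ with $g\leq k$ on $\Omega _{\ast }$, and fix $Q\in \mathcal{M}_{f}$. Since $\mathrm{supp}(Q)\subseteq \Omega _{\ast }$, one has $g\leq (H\cdot S)_{T}$ $Q$-a.s. Because $Q$ has finite support, the integral $E_{Q}[(H\cdot S)_{T}]$ is a finite sum which equals $0$ by the telescoping $\mathbb{F}$-martingale argument $E_{Q}[H_{t}\cdot \Delta S_{t}]=E_{Q}[H_{t}\cdot E_{Q}[\Delta S_{t}\mid \mathcal{F}_{t-1}]]=0$. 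Hence $E_{Q}[g]\leq 0$ (with the convention $\infty -\infty =-\infty $ handling the case where $g$ is not integrable from above).

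For the reverse inclusion, I would use Proposition \ref{supsup} to show it suffices to prove $g\in \mathcal{C}(Q)$ for every $Q\in \mathcal{M}_{f}$. Fix such a $Q$ and restrict the problem to the \emph{finite} probability space $(\mathrm{supp}(Q),2^{\mathrm{supp}(Q)},Q)$ on which $S$ is a martingale, so that classical no-arbitrage holds. On a finite space the classical one-period-pricing superhedging duality gives
\begin{equation*}
\pi _{Q}(g)=\max \{E_{Q^{\prime }}[g]\mid Q^{\prime }\in \mathcal{M},\;Q^{\prime }\ll Q\},
\end{equation*}
with the infimum attained thanks to the closedness of $\mathcal{K}-\mathcal{L}(\Omega ,\mathcal{F})_{+}$ in finite dimension under $NA(Q)$. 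Every such $Q^{\prime }$, extended by zero outside $\mathrm{supp}(Q)$, lies in $\mathcal{M}_{f}$, hence $E_{Q^{\prime }}[g]\leq 0$ by hypothesis. Therefore $\pi _{Q}(g)\leq 0$ and the attained minimiser yields some $H\in \mathcal{H}$ with $(H\cdot S)_{T}\geq g$ $Q$-a.s., i.e.\ $g\in \mathcal{C}(Q)$.

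The main subtlety I anticipate is the one in the second inclusion: the finite-dimensional superhedging duality must be invoked in a form that produces a measure $Q^{\prime }$ absolutely continuous with respect to $Q$ (so that $Q^{\prime }\in \mathcal{M}_{f}$), and one must ensure that the infimum $\pi _{Q}(g)$ is actually attained so that $\pi _{Q}(g)\leq 0$ translates into genuine membership of $\mathcal{C}(Q)$ rather than merely arbitrarily close approximation. Both points are standard on finite probability spaces under $NA$, but they are the only places where care is required; the rest is a direct combination of the bookkeeping above with the key identity \eqref{intersection} of Proposition \ref{supsup}.
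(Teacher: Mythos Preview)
Your proposal is correct and follows essentially the same route as the paper: both directions rely on Proposition~\ref{supsup} to reduce to a fixed $Q\in\mathcal{M}_f$, and on the finite support of $Q$ to invoke a classical duality. The only cosmetic difference is that the paper spells out the finite-dimensional step via the bipolar theorem (closedness of $\mathcal{C}_Q$ from Kabanov--Stricker, identification of $(\mathcal{C}_Q)^0$ with densities of $R\ll Q$, $R\in\mathcal{M}_f$, then $\mathcal{C}_Q=(\mathcal{C}_Q)^{00}$), whereas you package the same content as the ``classical superhedging duality'' $\pi_Q(g)=\max\{E_{Q'}[g]:Q'\ll Q,\,Q'\in\mathcal{M}\}$ with the infimum attained; on a finite probability space these are two phrasings of the identical argument.
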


\begin{proof}
Clearly $\mathcal{C}\subseteq \{g\in \mathcal{L}(\Omega ,\mathcal{F})\mid
E_{R}[g]\leq 0\;\forall \,R\in \mathcal{M}_{f}\}=:\widetilde{\mathcal{C}}$.
Fix $Q\in \mathcal{M}_{f}$ \ and observe that $L^{0}(\Omega ,\mathcal{F}%
,Q)\equiv L^{1}(\Omega ,\mathcal{F},Q)\equiv L^{\infty }(\Omega ,\mathcal{F}%
,Q)$, which denote, respectively, the space of equivalent classes of $Q$%
-a.s. finite, $Q$-integrable and $Q$-a.s. bounded $\mathcal{F}$-measurable
random variables on $\Omega$. For $g\in \mathcal{L}(\Omega ,\mathcal{F})$ we
denote with the capital letter $G$ the corresponding equivalence class $G\in
L^{0}(\Omega ,\mathcal{F},Q)$. Denote also by $L_{+}^{0}(\Omega ,\mathcal{F}%
,Q)$ the $Q$-a.s. non negative elements of $L^{0}(\Omega ,\mathcal{F},Q)$.
The quotient of $\mathcal{K}$ and $\mathcal{C}(Q)$ with respect to the $Q$%
-a.s. identification $\sim _{Q}$ are denoted respectively by%
\begin{eqnarray*}
\mathcal{K}_{Q} &:&=\{K\in L^{0}(\Omega ,\mathcal{F},Q)\mid K=(H\cdot S)_{T}%
\text{ }Q-\text{a.s., }H\in \mathcal{H}\}, \\
\mathcal{C}_{Q} &:&=\{G\in L^{0}(\Omega ,\mathcal{F},Q)\mid \exists K\in
\mathcal{K}_{Q}\text{ such that }G\leq K\;Q-\text{a.s.}\}=\mathcal{K}%
_{Q}-L_{+}^{0}(\Omega ,\mathcal{F},Q).
\end{eqnarray*}%
Now we may follow the classical arguments: the convex cone $\mathcal{C}_{Q}$
is closed in probability with respect to $Q$ (see e.g. \cite{KS01} Theorem
1). As $Q\in \mathcal{M}_{f},$ $\mathcal{C}_{Q}$ is also closed in $%
L^{1}(\Omega ,\mathcal{F},Q)$ and therefore:%
\begin{equation*}
\left( \mathcal{C}_{Q}\right) ^{0}=\left\{ Z\in L^{\infty }(\Omega ,\mathcal{%
F},Q)\mid E[ZG]\leq 0\text{ }\forall G\in \mathcal{C}_{Q}\right\} \subseteq
L^{\infty }(\Omega ,\mathcal{F},Q)\cap L_{+}^{0 }(\Omega ,\mathcal{F},Q).
\end{equation*}%
Notice that $R\ll Q$ and $R\in \mathcal{M}_{f}$ if and only if $R\ll Q$ and $%
\frac{dR}{dQ}\in (\mathcal{C}_{Q})^{0}$. Hence:
\begin{eqnarray}
\left( \mathcal{C}_{Q}\right) ^{00} &=&\left\{ G\in L^{1}(\Omega ,\mathcal{F}%
,Q)\mid E[ZG]\leq 0\text{ }\forall Z\in (\mathcal{C}_{Q})^{0}\right\}  \notag
\\
&=&\left\{ G\in L^{1}(\Omega ,\mathcal{F},Q)\mid E_{R}[G]\leq 0\text{ }%
\forall R\ll Q\text{ s.t. }\frac{dR}{dQ}\in (\mathcal{C}_{Q})^{0}\right\}
\notag \\
&=&\left\{ G\in L^{1}(\Omega ,\mathcal{F},Q)\mid E_{R}[G]\leq 0\text{ }%
\forall R\ll Q\text{ s.t. }R\in \mathcal{M}_{f}\right\}  \label{13}
\end{eqnarray}%
Let $g\in \widetilde{\mathcal{C}}$. By the characterization in (\ref{13})
the corresponding $G$ belongs to $\left( \mathcal{C}_{Q}\right) ^{00}$. By
the bipolar theorem $\mathcal{C}_{Q}=\left( \mathcal{C}_{Q}\right) ^{00}$
and therefore $G\in \mathcal{C}_{Q}$ and $g\in \mathcal{C}(Q)$ (as defined
in \eqref{11}). Since this holds for any $Q\in \mathcal{M}_{f}$, from $%
\mathcal{C}=\bigcap_{Q\in \mathcal{M}_{f}}\mathcal{C}(Q)$ (Proposition \ref%
{supsup}) we conclude that $g\in \mathcal{C}$.
\end{proof}

\begin{remark}
One may ask whether the bipolar duality \eqref{dualrepresentation} implies
that $\mathcal{C}$ is closed with respect to some topology. To answer this
question let us introduce on $\mathcal{L}(\Omega ,\mathcal{F})$ the
following equivalence relation: for any $X,Y\in \mathcal{L}(\Omega ,\mathcal{%
F})$
\begin{equation*}
X\sim Y\text{ if and only if }X(\omega )-Y(\omega )=k(\omega )\text{ for
some }k\in \mathcal{K}\text{ and for every }\omega \in \Omega _{\ast }.
\end{equation*}%
Consider the quotient space $\mathbf{L}(\Omega ,\mathcal{F})=\mathcal{L}%
(\Omega ,\mathcal{F})/\sim $, denote with $[X]$ the equivalent class in $%
\mathbf{L}(\Omega ,\mathcal{F})$ having $X$ as a representative and let $V_f$
be the vector space generated by $\mathcal{M}_{f}$. We first claim that the
couple $(\mathbf{L}^{{}}(\Omega ,\mathcal{F}),V_{f})$ is a separated dual
pair under the bilinear form $\langle \cdot ,\cdot \rangle :\mathbf{L}%
^{{}}(\Omega ,\mathcal{F})\times V_{f}\rightarrow \mathbb{R}$ defined by: $%
\langle \lbrack X],\mu \rangle \mapsto E_{\mu }[X],$ for any $X\in \lbrack
X] $. Notice that the form $\langle \lbrack X],\mu \rangle \mapsto E_{\mu
}[X]$ is well posed as $E_{\mu }[k]=0$ for all $k\in \mathcal{K}$ and the
pairing is obviously bilinear. Clearly if $\mu \neq 0$ then there exists $%
\omega \in \Omega _{\ast }$ such that $\mu (\{\omega \})\neq 0$ and $E_{\mu
}[\mathbf{1}_{\omega }]\neq 0$. Thus we have showed that $\langle \lbrack
X],\mu \rangle =0,$ for every $[X], $ implies $\mu =0$. \newline
We now prove that $\langle \lbrack X],\mu \rangle =0$ for every $\mu $
implies $[X]=[0]$. By contradiction assume $[X]\neq \lbrack 0]$. By
assumption, $X$ can not be replicable at a non zero cost. Observe that if $%
X\in \lbrack X]$ is replicable at zero cost in any market $(\Omega ,\mathcal{%
F},\mathbb{F},S;Q)$ for any possible choice $Q\in \mathcal{M}_{f}$ then by
Corollary \ref{replica} $X$ is pathwise replicable for every $\omega \in
\Omega _{\ast }$, or in other words: $[X]=[0]$. \newline
Hence our assumption $[X]\neq \lbrack 0]$ implies that there exists a $Q\in
\mathcal{M}_{f}$ such that the market $(\Omega ,\mathcal{F},\mathbb{F},S;Q)$
is not complete, so that $\mathcal{M}_{e}(Q):=\{Q^{\ast }\sim Q\mid Q^{\ast
}\in \mathcal{M}\}\}\neq \{Q\}$, and $X\in \lbrack X]$ is not replicable in
such market. Then
\begin{equation*}
\inf_{Q^{\ast }\in \mathcal{M}_{e}(Q)}E_{Q^{\ast }}[X]<\sup_{Q^{\ast }\in
\mathcal{M}_{e}(Q)}E_{Q^{\ast }}[X].
\end{equation*}%
As $Q\in \mathcal{M}_{f}$ has finite support, $\mathcal{M}_{e}(Q)\subset
\mathcal{M}_{f}$ and there exists a $\mu \in \mathcal{M}_{e}(Q)\subset V_{f}$
such that $E_{\mu }[X]\neq 0,$ which is a contradiction.

Now we conclude that the cone $\mathcal{C}/_{\sim }$ is closed with respect
to the weak topology $\sigma (\mathbf{L}^{{}}(\Omega ,\mathcal{F}),V_{f})$.
Indeed, from \eqref{dualrepresentation} we obtain that%
\begin{equation*}
\mathcal{C}/_{\sim }=\{[g]\in \mathbf{L}(\Omega ,\mathcal{F})\mid
E_{Q}[g]\leq 0\;\forall \,Q\in \mathcal{M}_{f}\}=\bigcap_{Q\in \mathcal{M}%
_{f}}\{[g]\in \mathbf{L}(\Omega ,\mathcal{F})\mid E_{Q}[g]\leq 0\}
\end{equation*}%
is the intersection of $\sigma (\mathbf{L}^{{}}(\Omega ,\mathcal{F}),V_{f})$%
-closed sets.
\end{remark}

\section{Proof of Theorem \protect\ref{superH}}

% As shown by the following result from \cite{BF04}, the abstract version of
% the superhedging theorem is a simple consequence that the cone $\mathcal{C}$
% and its bipolar cone coincide.
%
% \begin{theorem}[Theorem 10, \protect\cite{BF04}]
% \label{super}Let $L,L^{\prime }$ be two vector spaces and let $<\cdot ,\cdot
% >:L\times L^{\prime }\rightarrow \mathbb{R}$ be a bilinear form. Let $%
% G\subseteq L$ be a convex cone satisfying $G^{00}=G$, where $G^{0}\mathbf{:=}%
% \{z\in L^{\prime }\mid <g,z>\leq 0\ \forall g\in G\},$ $G^{00}\mathbf{:=}%
% \{g\in L\mid <g,z>\leq 0\ \forall z\in G^{0}\},$ and assume the existence of
% an element $\mathbf{1\in }L$ such that $-\mathbf{1\in }G.$ If the set $%
% N_{1}\triangleq \{z\in G^{0}\mid <\mathbf{1},z>=1\}$ is not empty then for
% all $h\in L$ we have:%
% \begin{equation}
% \inf \left\{ x\in \mathbb{R}\mid h-x\mathbf{1}\in G\right\} =\sup \left\{
% <h,z>\mid z\in N_{1}\right\} .  \label{77}
% \end{equation}
% \end{theorem}

We first recall from \cite{BFM14} the relevant properties of the set $\Omega
_{\ast }$ that will be needed several times in the proofs.

\begin{proposition}[ Proposition 4.18, \protect\cite{BFM14} ]
\label{LemmaNOpolar}In the setting described in Section \ref{Intro} we have
\begin{eqnarray}
\mathcal{M} &\neq &\varnothing \Longleftrightarrow \Omega _{\ast }\neq
\varnothing \Longleftrightarrow \mathcal{M}_{f}\neq \varnothing  \notag \\
\Omega _{\mathcal{\ast }} &=&\left\{ \omega \in \Omega \mid \exists Q\in
\mathcal{M}_{f}\text{ s.t. }Q(\omega )>0\right\} .  \label{omega*f}
\end{eqnarray}%
The complement of $\Omega _{\ast }$ is the maximal $\mathcal{M}$-polar set.
\end{proposition}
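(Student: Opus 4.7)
The plan is to reduce the proposition to a single pathwise construction. The inclusions $\mathcal{M}_{f}\subseteq\mathcal{M}$ immediately give $\mathcal{M}_{f}\neq\emptyset\Rightarrow\mathcal{M}\neq\emptyset$ and the $\supseteq$ direction of \eqref{omega*f}, and the same inclusions also give $\mathcal{M}_{f}\neq\emptyset\Rightarrow\Omega_{\ast}\neq\emptyset$ since every finitely supported measure has atoms. Everything else (the implication $\mathcal{M}\neq\emptyset\Rightarrow\mathcal{M}_{f}\neq\emptyset$, the $\subseteq$ direction of \eqref{omega*f}, and the polar-set statement) is then extracted from the following \emph{key claim}: for every $\omega^{\ast}\in\Omega$ which is an atom of some $Q\in\mathcal{M}$, there exists $Q_{f}\in\mathcal{M}_{f}$ with $Q_{f}(\{\omega^{\ast}\})>0$.

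For the key claim I would argue inductively through time using regular conditional kernels. Writing $s_{t}^{\ast}:=S_{t}(\omega^{\ast})$, the kernels $Q_{t}(\omega,\cdot):=Q(S_{t}\in\cdot\mid\mathcal{F}_{t-1}^{S})(\omega)$ exist because $\Omega$ is Polish, and by the martingale property each has barycenter $S_{t-1}(\omega)$. Since $\{\omega^{\ast}\}$ is a $Q$-atom contained in $\{S_{t}=s_{t}^{\ast}\}$, the weight $\alpha_{t}:=Q_{t}(\omega^{\ast},\{s_{t}^{\ast}\})$ is strictly positive. Decomposing
\[
Q_{t}(\omega^{\ast},\cdot)=\alpha_{t}\delta_{s_{t}^{\ast}}+(1-\alpha_{t})\nu_{t},
\]
the residual $\nu_{t}$ has mean $m_{t}:=(s_{t-1}^{\ast}-\alpha_{t}s_{t}^{\ast})/(1-\alpha_{t})$, and by Carath\'eodory $m_{t}$ is a convex combination of at most $d+1$ points of $\mathrm{supp}(\nu_{t})$; this yields a finitely supported $\widetilde{\nu}_{t}$ with the same mean, and therefore a finitely supported $\widetilde{Q}_{t}:=\alpha_{t}\delta_{s_{t}^{\ast}}+(1-\alpha_{t})\widetilde{\nu}_{t}$ of barycenter $s_{t-1}^{\ast}$ which still charges $s_{t}^{\ast}$. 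Concatenating the kernels $\widetilde{Q}_{1},\ldots,\widetilde{Q}_{T}$ along the path of $\omega^{\ast}$ on $(\mathbb{R}^{d})^{T+1}$ and then lifting back to $\Omega$ through the Borel map $S$ (choosing one preimage $\omega$ in $\Omega$ for each finite support point, with $\omega^{\ast}$ itself serving as the preimage of $(s_{0}^{\ast},\ldots,s_{T}^{\ast})$) produces $Q_{f}\in\mathcal{M}_{f}$ with $Q_{f}(\{\omega^{\ast}\})\geq\prod_{t=1}^{T}\alpha_{t}>0$.

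The maximal polar set assertion then decomposes into two half-steps. First, any $N\in\mathcal{N}$ is disjoint from $\Omega_{\ast}$: given $\omega\in\Omega_{\ast}$, the key claim yields $Q\in\mathcal{M}_{f}$ with $Q(\{\omega\})>0$, so $\omega$ cannot lie in any Borel set that witnesses polarity of $N$. Second, $\Omega\setminus\Omega_{\ast}$ is itself polar: the key claim applied to every point of $\mathrm{supp}(Q)$ gives $\mathrm{supp}(Q)\subseteq\Omega_{\ast}$ for every $Q\in\mathcal{M}$, whence $Q_{\ast}(\Omega\setminus\Omega_{\ast})=0$ for every such $Q$; combining this with analyticity of $\Omega_{\ast}$ (Proposition \ref{propAn}) and universal measurability of analytic sets one extracts a Borel superset $A\supseteq\Omega\setminus\Omega_{\ast}$ that is simultaneously $Q$-null for every $Q\in\mathcal{M}$.

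The main obstacle I anticipate is this last uniform Borel approximation, since $\mathcal{M}$ is not dominated and individual capacitability arguments would only produce Borel hulls depending on $Q$. The way around is to exploit the explicit analytic description of $\Omega_{\ast}$ together with the measurable-selection machinery developed in \cite{BFM14}, which yields the single envelope $A$ valid across the entire family $\mathcal{M}$. By comparison, the Carath\'eodory reduction and the path-by-path induction of the key claim are essentially elementary once regular conditional kernels are in hand; the only subtlety there is the bookkeeping of preimages under $S$ when transferring between the abstract Polish $\Omega$ and the canonical path space $(\mathbb{R}^{d})^{T+1}$.
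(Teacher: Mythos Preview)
This proposition is quoted from \cite{BFM14} and is not proved in the present paper, so there is no in-paper argument against which to compare. I comment only on the soundness of your sketch.

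There are two genuine gaps.

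\emph{First, the reductions to the key claim are incomplete.} Your key claim is stated for $\omega^{\ast}$ that is an \emph{atom} of some $Q\in\mathcal{M}$. This yields the inclusion $\Omega_{\ast}\subseteq\{\omega:\exists\,Q\in\mathcal{M}_{f},\ Q(\omega)>0\}$ directly, but it does not give $\mathcal{M}\neq\varnothing\Rightarrow\mathcal{M}_{f}\neq\varnothing$ when every $Q\in\mathcal{M}$ is non-atomic, and it does not give ``$\mathrm{supp}(Q)\subseteq\Omega_{\ast}$ for every $Q\in\mathcal{M}$'' either: for non-atomic $Q$ no point of $\mathrm{supp}(Q)$ is an atom, so your key claim says nothing about such points. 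Both applications need a version of the construction that starts from a $Q$-generic $\omega$, using only that the regular conditional kernels have the correct barycenter $Q$-a.s. (Note also that invoking Proposition~\ref{propAn} here is circular, since the proof of Proposition~\ref{propAn} already uses \eqref{omega*f}.)

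\emph{Second, the inductive construction of $Q_{f}$ is not closed.} You produce one-step kernels $\widetilde{Q}_{t}$ only at the history $s^{\ast}_{0:t-1}$. Once the process leaves the path of $\omega^{\ast}$ at some Carath\'eodory point $x\in\mathrm{supp}(\nu_{t})$, you still need a finitely supported one-step martingale kernel at the new history $(s^{\ast}_{0:t-1},x)$, and that kernel must again be supported on values actually attained by $S_{t+1}$ on $\Omega$. Neither comes for free: $x$ lies merely in the \emph{closed} support of $Q_{t}(\omega^{\ast},\cdot)$ and may fail to belong to the image $S_{t}(\Sigma^{\omega^{\ast}}_{t-1})$; in that case there is no $\omega\in\Omega$ with $S_{0:t}(\omega)=(s^{\ast}_{0:t-1},x)$, so neither the iteration of the kernel argument nor the ``lifting back through $S$'' step can proceed. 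Dismissing this as ``bookkeeping of preimages'' understates the difficulty: one has to show that the barycenter sits in the \emph{relative interior} of the relevant convex hull so that Carath\'eodory representatives can be chosen among \emph{achieved} values $\Delta S_{t}(\omega')$, $\omega'\in\Sigma^{\omega^{\ast}}_{t-1}$, and then repeat this at every node of the resulting finite tree. That level-set analysis on $\Omega$ (rather than on path space) is exactly the substantive content of Proposition~4.18 in \cite{BFM14}.
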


\paragraph{Proof of Theorem \protect\ref{superH}}

As already stated in the introduction, we may assume w.l.o.g. that $\mathcal{%
M}\neq \varnothing $, or equivalently $\mathcal{M}_{f}\neq \varnothing $.
The first equality of the theorem holds because of the definition of $%
\mathcal{M}$-q.s. inequality and the fact that $\Omega _{\ast }$ is the
maximal $\mathcal{M}$-polar set.

\textbf{Step 1: }Here we show that
\begin{equation*}
\inf \left\{ x\in \mathbb{R}\mid \exists H\in \mathcal{H}\text{ such that }%
x+(H\cdot S)_{T}(\omega )\geq g(\omega )\ \forall \omega \in \Omega _{%
\mathcal{\ast }}\right\} =\sup_{Q\in \mathcal{M}_{f}}E_{Q}[g].
\end{equation*}
Note first that the left hand side of the previous equation can be rewritten
as $\inf\{x\in\mathbb{R}\mid g-x\in\mathcal{C}\}$. From Corollary \ref%
{bipolar} it follows:
\begin{eqnarray*}
\inf\{x\in\mathbb{R}\mid g-x\in\mathcal{C}\}&=&\inf\{x\in\mathbb{R}\mid
E_Q[g-x]\leq 0\quad \forall Q\in\mathcal{M}_f \} \\
&=&\inf\{x\in\mathbb{R}\mid x\geq E_Q[g]\quad \forall Q\in\mathcal{M}_f \} \\
&=&\sup\{ E_Q[g]\mid Q\in\mathcal{M}_f\}.
\end{eqnarray*}

% Consider $V_{f}$ the vector space generated by $\mathcal{M}_{f}$. The couple
% $(\mathcal{L}(\Omega ,\mathcal{F}),V_{f})$ form a (not separated) dual pair
% under the bilinear form
% \begin{equation*}
% \langle \cdot ,\cdot \rangle :\mathcal{L}(\Omega ,\mathcal{F})\times
% V_{f}\rightarrow \mathbb{R\quad }\langle X,\mu \rangle \mapsto E_{\mu }[X].
% \end{equation*}%
% Set $G:=\mathcal{C}$. Adopting for $G^{0},$ $G^{00}$ and $N_{1}$ the
% notations of Theorem \ref{super}, we observe that $G^{0}=(V_{f})_{+},$ $%
% N_{1}=\{\mu \in (V_{f})_{+}\mid E_{\mu }[\mathbf{1}_{\Omega }]=1\}=\mathcal{M%
% }_{f}\neq \varnothing .$
%
% In addition, by Corollary \ref{bipolar} we obtain $G=G^{00}$ and from (\ref%
% {77}) we then conclude:%
% \begin{equation*}
% \inf \{x\in \mathbb{R}\mid g-x\in \mathcal{C}\}=\sup \left\{ E_{Q}[g]\mid
% Q\in \mathcal{M}_{f}\right\} .
% \end{equation*}%
\textbf{Step 2: }We end the proof by showing that for any $g\in \mathcal{L}%
(\Omega ,\mathcal{F})$
\begin{equation}
\sup_{Q\in \mathcal{M}}E_{Q}[g]=\sup_{Q\in \mathcal{M}_{f}}E_{Q}[g],
\label{equalitySup}
\end{equation}%
where we adopt the convention $\infty -\infty =-\infty $ for those random
variables $g$ whose positive and negative part is not integrable. Set:
\begin{equation*}
m:=\sup_{Q\in \mathcal{M}}E_{Q}[g]\text{,\quad }l:=\sup_{Q\in \mathcal{M}%
_{f}}E_{Q}[g].
\end{equation*}%
We obviously have that $l\leq m$ so that we only have to prove the converse
inequality. If $l=\infty $ there is nothing to prove. Suppose then $l<\infty
$. We first show that
\begin{equation}
\text{if }Q\in \mathcal{M}\text{ satisfy }E_{Q}[g]>l\Rightarrow
E_{Q}[g]=\infty  \label{infinity_property}
\end{equation}%
Suppose indeed by contradiction that there exists $Q\in \mathcal{M}\setminus
\mathcal{M}_{f}$ such that $l<E_{Q}[g]<\infty $. Consider now an arbitrary
version of the process $g_{t}:=E_{Q}[g\mid \mathcal{F}_{t}]$ and extend the
original market with the asset $S_{t}^{d+1}:=g_{t}$ for $t\in I$. We
obviously have that $Q$ is a martingale measure for the extended market and
from Proposition \ref{LemmaNOpolar} this implies the existence of a finite
support martingale measure $Q_{f}$ which, by construction, belongs to $%
\mathcal{M}_{f}$. Since $E_{Q_{f}}[g]=g_{0}>l$, which is the supremum of the
expectations of $g$ over $\mathcal{M}_{f}$, we have a contradiction.

From \eqref{infinity_property} we readily infer that if $m<\infty $ then $%
l=m $. We are only left to study the case of $m=\infty $ and we show that
this is not possible under the hypothesis $l<\infty $. Consider first the
class of martingale measures $\mathcal{Q}(g)\subset \mathcal{M}$ such that $%
E_{Q}[g^{-}]=\infty $. We obviously have that $\mathcal{Q}(g)\cap \mathcal{M}%
_{f}=\varnothing $, moreover, since $l<m=\infty $ from %
\eqref{infinity_property} and from $\infty -\infty =-\infty $, there exists $%
\widetilde{Q}\in \mathcal{M}\setminus \mathcal{Q}(g)$ such that $E_{%
\widetilde{Q}}[g]=\infty $ and $E_{\widetilde{Q}}[g^{-}]<\infty $. Consider
now the sequence of claims $g_{n}:=g\wedge n$ for any $n\in \mathbb{N}$.
From $E_{\widetilde{Q}}[g^{-}]<\infty $ and Monotone Convergence Theorem we
have $E_{\widetilde{Q}}[g\wedge n]\uparrow E_{\widetilde{Q}}[g]=\infty $,
hence, there exists $\overline{n}\in \mathbb{N}$ such that $\overline{n}\geq
E_{\widetilde{Q}}[g\wedge \overline{n}]>l$. Note now that
\begin{equation}
\sup_{Q\in \mathcal{M}_{f}}E_{Q}[g\wedge \overline{n}]\leq \sup_{Q\in
\mathcal{M}_{f}}E_{Q}[g]=l<E_{\widetilde{Q}}[g\wedge \overline{n}]
\end{equation}%
Applying \eqref{infinity_property} to $g\wedge \overline{n}$ we get $E_{%
\widetilde{Q}}[g\wedge \overline{n}]=+\infty $, which is a contradiction
since the contingent claim $g\wedge \overline{n}$ is bounded.

\section{Example: forget about superhedging everywhere!}

\label{example} Let $(\Omega ,\mathcal{F})=(\mathbb{R}^{+},\mathcal{B}(%
\mathbb{R}^{+}))$. Consider a one period market ($T=1$) defined by a
non-risky asset $S_{t}^{0}\equiv 1$ for $t=0,1$ (interest rate is zero) and
a single risky asset $S_{T}^{1}(\omega )=\omega $ with initial price $%
S_{0}^{1}:=s_{0}>0$. In this market we also have two options $\Phi =(\phi
^{0},\phi ^{1})$, where $\phi ^{0}:=f^{0}(S_{T})$ is a butterfly spread
option and $\phi ^{1}:=f^{1}(S_{T})$ is a power option, i.e.
\begin{equation*}
\begin{array}{l}
f^{0}(x):=(x-K_{0})^{+}-2(x-(K_{0}+1))^{+}+(x-(K_{0}+2))^{+} \\
f^{1}(x):=(x^{2}-K_{1})^{+}. \\
\end{array}%
\end{equation*}%
Assume $K_{0}>s_{0}$, $K_{1}>(K_{0}+2)^{2}$ and that these options are
traded at prices $c_{0}=0$ and $c_{1}>0$ respectively. Set $c=(c_{0},c_{1}).$
The payoffs of these financial instruments are shown in Figure 1 for $%
K_{0}=2 $, $K_1=25$:

\begin{figure}[h]
\centering
\begin{tikzpicture}
  \draw[->] (0,0) -- (6,0) node[right] {$x$};
  \draw[->] (0,0) -- (0,6) node[above] {$y$};
  \draw[thick, scale=0.8 ,domain=0:6,smooth,variable=\x,blue] plot ({\x},{\x});
  \draw[thick, scale=0.8 ,domain=9:10,smooth,variable=\x,blue] plot ({\x},7) node[right,black]{payoff of $S^1$};
  \draw[dotted, ultra thick,scale=0.8,domain=5:5.65,smooth,variable=\x]  plot ({\x},{\x*\x-25});
  \draw[dotted, ultra thick,scale=0.8,domain=0:5,smooth,variable=\x]  plot ({\x},{0});
   \draw[dotted, ultra thick, scale=0.8 ,domain=9:10,smooth,variable=\x] plot ({\x},5.6) node[right,black]{payoff of $\phi^1$};
  \draw[thick,scale=0.8,domain=0:3,smooth,variable=\x,red]  plot ({\x},{max(\x-2,0)});
  \draw[thick,scale=0.8,domain=3:6,smooth,variable=\x,red]  plot ({\x},{max(-\x+4,0});
  \draw[thick, scale=0.8 ,domain=9:10,smooth,variable=\x,red] plot ({\x},6.3) node[right,black]{payoff of $\phi^0$};
  \draw (7,4) rectangle (10.1,6.1);

  \foreach \x in {0,1,2,3,4,5,6,7}
    \draw [scale=0.8](\x ,1pt) -- (\x ,-1pt) node[anchor=north] {$\x$};
  \foreach \y in {0,1,2,3,4,5,6,7}
    \draw [scale=0.8] (1pt,\y ) -- (-1pt,\y ) node[anchor=east] {$\y$};

\end{tikzpicture}
\caption{Payoffs.}
\end{figure}

\begin{definition}
(1) There exists a \emph{model independent arbitrage} (in the sense of
Acciaio et al. \cite{AB13}) if $\exists (H,h)\in \mathcal{H}\times \mathbb{R}%
^{2}$ such that $(H\cdot S)_{T}(\omega )+h(\Phi (\omega )-c)>0\ \forall
\omega \in \Omega $.

(2) There exists a \emph{one point arbitrage} (in the sense of \cite{BFM14})
if $\exists (H,h)\in \mathcal{H}\times \mathbb{R}^{2}$ such that $(H\cdot
S)_{T}(\omega )+h(\Phi (\omega )-c)\geq 0$ $\forall \omega \in \Omega $ and $%
(H\cdot S)_{T}(\omega )+h(\Phi (\omega )-c)>0\ $for some $\omega \in \Omega $%
.
\end{definition}

It is clear that any long position in the option $\phi ^{0}$ is a one point
arbitrage but it is not a model independent arbitrage. We have indeed that
there are No Model Independent Arbitrage as:%
\begin{equation*}
\mathcal{M}_{\Phi }\neq \varnothing .
\end{equation*}%
More precisely, any $Q\in \mathcal{M}_{\Phi }$ must satisfy $Q\left(
(K_{0},K_{0}+2)\right) =0,$ so that $(K_{0},K_{0}+2)$ is an $\mathcal{M}%
_{\Phi }$-polar set, nevertheless,
\begin{equation*}
\Omega _{\Phi }=\mathbb{R}^{+}\setminus (K_{0},K_{0}+2).
\end{equation*}%
One possible way to see this is to observe that on $\Gamma :=\mathbb{R}%
^{+}\setminus (K_{0},K_{0}+2)$ the option $\phi ^{0}$ has zero payoff and
zero initial cost so that any probability $P$, with supp$(P)\subseteq \Gamma
$, that is a martingale measure for $S^{1},\phi ^{1}$, is also a martingale
measure for $S^{0},S^{1},\phi ^{0},\phi ^{1}$. Take now $\omega _{1}=0$, $%
\omega _{2}\in (K_{0}+2,\sqrt{K_{1}})$, $\omega _{3}>\sqrt{K_{1}+c_{1}}$ and
observe that the corresponding points $x_{1}:=(-s_{0},-c_{1})$, $%
x_{2}:=(\omega _{2}-s_{0},-c_{1})$ and $x_{3}:=(\omega _{3}-s_{0},\phi
^{1}(\omega _{3})-c_{1}))$ clearly belong to $conv(\Delta X(\omega )\mid
\omega \in \Gamma )$ where $\Delta X$ is the random vector $%
[S_{1}^{1}-s_{0};\phi ^{1}-c_{1}]$. Consider now $\varepsilon :=\frac{1}{2}%
\min \{c_{1},s_{0},|\omega _{2}-s_{0}|\}$ so that for $\omega _{3}$
sufficiently large we have
\begin{equation*}
B_{\varepsilon }(0)\subseteq conv(\Delta X(\omega )\mid \omega \in \{\omega
_{1},\omega _{2},\omega _{3}\})\subseteq conv(\Delta X(\omega )\mid \omega
\in \Gamma ).
\end{equation*}%
We have therefore that $0$ is in the interior of $conv(\Delta X(\omega )\mid
\omega \in \Gamma )$ and from Corollary 4.11 item 1) in \cite{BFM14}, $%
\Omega _{\Phi }=\Gamma =\mathbb{R}^{+}\setminus (K_{0},K_{0}+2)$. Note,
moreover, that this is true for any value of the price $c_{1}>0$.

Consider now the digital options $g_{i}=F_{i}(S_{T}),$ $i=1,2$, with
\begin{eqnarray*}
F_{1}(x) &=&\mathbf{1}_{(K_{0},K_{0}+2)}(x), \\
F_{2}(x) &=&\mathbf{1}_{[K_{0},K_{0}+2]}(x)
\end{eqnarray*}%
which differ only at the extreme points of the interval $(K_{0},K_{0}+2)$
and observe that $F_{2}$ is upper semi-continuous while $F_{1}$ is not. From
the previous remark $g_{1}$ has price zero under any martingale measure $%
Q\in \mathcal{M}_{\Phi }$, so that
\begin{equation}
\sup_{Q\in \mathcal{M}_{\Phi }}E_{Q}[g_{1}]=0.  \label{555}
\end{equation}%
Define:%
\begin{equation*}
\pi _{\Omega }(g):=\inf \left\{ x\in \mathbb{R}\mid \exists (H,h)\in
\mathcal{H}\times \mathbb{R}^{2}\text{ such that }x+(H\cdot S)_{T}(\omega
)+h\Phi (\omega )\geq g(\omega )\ \forall \omega \in \Omega \right\}
\end{equation*}%
and recall that
\begin{equation*}
\pi _{\Phi }(g):=\inf \left\{ x\in \mathbb{R}\mid \exists (H,h)\in \mathcal{H%
}\times \mathbb{R}^{2}\text{ such that }x+(H\cdot S)_{T}(\omega )+h\Phi
(\omega )\geq g(\omega )\ \forall \omega \in \Omega _{\Phi }\right\}
\end{equation*}

\begin{claim}
\label{claim}In this market:

\begin{enumerate}
\item $\pi _{\Phi }(g_{1})=\sup_{Q\in \mathcal{M}_{\Phi }}E_{Q}[g_{1}]=0$%
\quad and\quad $\pi _{\Phi }(g_{2})=\sup_{Q\in \mathcal{M}_{\Phi
}}E_{Q}[g_{2}];$

\item $\pi _{\Omega }(g_{1})=\min \left\{ \frac{s_{0}}{K_{0}},1\right\}
>\sup_{Q\in \mathcal{M}_{\Phi }}E_{Q}[g_{1}]=0;$

\item $\pi _{\Omega }(g_{2})=\sup_{Q\in \mathcal{M}_{\Phi }}E_{Q}[g_{2}].$
\end{enumerate}
\end{claim}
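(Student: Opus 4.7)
The plan is to prove the three items by combining Theorem~\ref{superHO} (for the $\pi_\Phi$ dualities) with explicit pathwise calculations for $\pi_\Omega$. The key geometric observation is that on $\Omega \setminus \Omega_\Phi = (K_0, K_0+2)$ the power option $\phi^1$ vanishes and the butterfly $\phi^0$ is a zero-cost tent, so super-hedging on the missing interval can be adjusted independently via the butterfly.

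For item~1, the value $\sup_{Q \in \mathcal{M}_\Phi} E_Q[g_1] = 0$ is immediate because every $Q \in \mathcal{M}_\Phi$ is supported in $\Omega_\Phi$ and $g_1 \equiv 0$ there; the equality $\pi_\Phi(g_1) = 0$ then follows from weak duality and the trivial zero-cost super-replicating portfolio $x = 0$, $H = 0$, $h = 0$. For $g_2$, the identity $\pi_\Phi(g_2) = \sup_{Q \in \mathcal{M}_\Phi} E_Q[g_2]$ is an application of Theorem~\ref{superHO}.

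For item~2, the upper bound $\pi_\Omega(g_1) \leq s_0/K_0$ is witnessed by $x = s_0/K_0$, $H = 1/K_0$, $h = 0$, whose payoff $S_T/K_0$ dominates $\mathbf{1}_{(K_0, K_0+2)}(S_T)$ at every $\omega \geq 0$; under the hypothesis $K_0 > s_0$ this also equals $\min\{s_0/K_0, 1\}$. For the matching lower bound, let $(x, H, h_0, h_1)$ super-replicate $g_1$ on $\Omega$ and denote by $P(\omega)$ the portfolio value at $\omega$. I would extract three linear inequalities: (i) $P(0) = x - H s_0 - h_1 c_1 \geq 0$; (ii) taking the right-limit $\omega \downarrow K_0$ inside $(K_0, K_0+2)$ and using continuity together with $f^0(K_0) = f^1(K_0) = 0$ (since $K_0^2 < K_1$), one obtains $P(K_0) = x + H(K_0 - s_0) - h_1 c_1 \geq 1$; (iii) as $\omega \to \infty$ the portfolio behaves like $h_1 \omega^2 + O(\omega)$, forcing $h_1 \geq 0$. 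Taking the convex combination of~(i) and~(ii) with weights $(1 - s_0/K_0, s_0/K_0)$, valid since $0 < s_0/K_0 < 1$, gives $x - h_1 c_1 \geq s_0/K_0$; combining with $h_1 \geq 0$ and $c_1 > 0$ yields $x \geq s_0/K_0$.

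For item~3, the inequality $\sup_Q E_Q[g_2] = \pi_\Phi(g_2) \leq \pi_\Omega(g_2)$ is immediate, since any $\Omega$-super-replicating portfolio also works on the subset $\Omega_\Phi$. The substantive direction is the converse: I would show that any $\Omega_\Phi$-super-hedging portfolio can be modified into an $\Omega$-super-hedging one at the same initial cost. Given $(x, H, h_0, h_1)$ super-hedging $g_2$ on $\Omega_\Phi$, consider $(x, H, h_0 + \alpha, h_1)$ for suitable $\alpha \geq 0$: the modification preserves both the initial cost (the butterfly is zero-cost) and the values on $\Omega_\Phi$ (where $\phi^0 \equiv 0$). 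On the missing interval $(K_0, K_0+2)$, where $\phi^1 \equiv 0$, the portfolio is piecewise affine in $\omega$ with a single breakpoint at $\omega = K_0 + 1$; by continuity of $P$ and the $\Omega_\Phi$-constraint ($g_2 = 1$ at $\omega = K_0, K_0+2$), the endpoint values are $\geq 1$, so choosing $\alpha$ large enough lifts the breakpoint value above $1$ and the whole piecewise-affine curve stays $\geq 1 = g_2$. The main step of the proof is this finite-dimensional adjustment, which is transparent from the tent shape of $\phi^0$.
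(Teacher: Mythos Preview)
Your proof is correct, and for items~2 and~3 it is in fact cleaner than the paper's.

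For item~1 the two arguments essentially agree; one caveat is that invoking Theorem~\ref{superHO} for $g_2$ tacitly presupposes its standing hypothesis $\mathcal{M}_\Phi=\{Q\in\mathcal{M}_f\mid\mathrm{supp}(Q)\subseteq\Omega_\Phi\}$, which you do not verify. The paper sidesteps this by observing that in a one-period market static and dynamic trading coincide, so one can treat $(S^1,\phi^0,\phi^1-c_1)$ as risky assets and apply Theorem~\ref{superH} directly. This is a minor citation issue, not a gap.

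For item~2 the paper argues by exhaustion: it fixes a generic super-hedge, shows successively that $h^1\neq 0$ and $h^0\neq 0$ are suboptimal, then treats the cases $H^1\geq 0$ and $H^1<0$ separately to conclude that the cheapest super-replicating portfolios use only $S^0,S^1$. Your approach is more direct: you extract the two pointwise constraints $P(0)\geq 0$ and $\lim_{\omega\downarrow K_0}P(\omega)\geq 1$ (the latter via continuity of the portfolio payoff), note that $h_1\geq 0$ from the quadratic tail, and take the convex combination with weights $(1-s_0/K_0,\,s_0/K_0)$ that eliminates $H$, yielding $x-h_1c_1\geq s_0/K_0$ and hence $x\geq s_0/K_0$. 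This is a sharper linear-programming-style argument that avoids the case analysis.

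For item~3 the paper simply cites Theorem~1.4 of \cite{AB13}, using that $g_2$ is upper semicontinuous. Your argument is self-contained: starting from an $\Omega_\Phi$-super-hedge, you add $\alpha\phi^0$ at zero cost, which leaves the portfolio unchanged on $\Omega_\Phi$ (where $\phi^0\equiv 0$) and, on the missing interval $(K_0,K_0+2)$ where $\phi^1\equiv 0$, lifts the single interior breakpoint of the piecewise-affine payoff above $1$; since the endpoint values at $K_0,K_0+2\in\Omega_\Phi$ are already $\geq g_2=1$, the modified portfolio super-hedges $g_2$ on all of $\Omega$. This has the advantage of being elementary and independent of \cite{AB13}, and it makes transparent precisely how the zero-cost butterfly bridges the gap between $\Omega_\Phi$ and $\Omega$ for the closed-interval indicator (and why no such free adjustment is available for $g_1$, whose right-limit constraint at $K_0$ forces a strictly positive cost).
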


\begin{remark}
(i) Item (1) is in agreement with the conclusion of Theorem \ref{superHO}.

(ii) Item (2) shows instead that the superhedging duality with respect to
the whole $\Omega $ does not hold for the claim $g_{1}$ (which is even
bounded). Note that in this example all the hypothesis of Theorem 1.4 in
\cite{AB13} are satisfied except for the upper semi-continuity of $g_{1}$.
\end{remark}

As the comparison between $g_{1}$ and $g_{2}$ in items (2) and (3) shows,
the assumption of upper semi-continuity of the claim seems artificial from
the financial point of view, even though necessary for the validity of
Theorem 1.4 in \cite{AB13}.

Our results demonstrates that it is possible to obtain a superhedging
duality on the relevant set $\Omega _{\Phi }$ (or $\Omega _{\ast }$ when
there are no options) for \textit{any measurable claim}, regardless of the
continuity assumptions (as well as without the existence of an option with
super-linear payoff).

\begin{proof}[Proof of the Claim \protect\ref{claim}]
Item (1) holds thanks to Theorem \ref{superH} since in the
one-period model there is no difference between dynamic and static
hedging. Notice also that the equalities $\pi _{\Phi
}(g_{1})=0=\sup_{Q\in \mathcal{M}_{\Phi
}}E_{Q}[g_{1}] $ are consequences of (\ref{555}) and the fact that $%
(H,h)=(0,0)$ is a superhedging strategy for $g_{1}$ on $\Omega _{\Phi }$. As
$g_{2}$ is upper semi-continuous, the superhedging duality in item (3) holds
thanks to Theorem 1.4 in \cite{AB13}, see (\ref{888}). In the remaining of
this section we conclude the proof by showing $\pi _{\Omega }(g_{1})=\min
\left\{ \frac{s_{0}}{K_{0}},1\right\}= \frac{s_{0}}{K_{0}}$ (by the
assumption $K_0>s_0$) and hence item (2).

Let us consider the model independent superhedging strategies i.e. the set
of $(H,h)\in \mathbb{R}^{2}\times \mathbb{R}^{2}$ such that $x+(H\cdot
S)_{T}(\omega )+h\Phi (\omega )\geq g_{1}(\omega )$ for any $\omega \in
\Omega $. Any admissible trading strategy is given by $%
(H,h):=[H^{0},H^{1},h^{0},h^{1}]\in \mathbb{R}^{4}$ which correspond to
positions in the securities $[S^{0},S^{1},\phi ^{0},\phi ^{1}]$ so that
\begin{equation}
\begin{array}{rll}
\text{price:} & V_{0}(H,h):= & H^{0}+H^{1}s_{0}+h^{1}c_{1} \\
\text{payoff:} & V_{T}(H,h):= & H^{0}+H^{1}\omega +h^{0}\phi ^{0}(\omega
)+h^{1}\phi ^{1}(\omega )%
\end{array}
\label{payoff_price}
\end{equation}

\textbf{Trivial super-hedges} There are two immediate strategies whose
terminal payoff is a super-hedge for $g_1$.

\begin{enumerate}
\item $S^{0}$ (i.e. $H^{0}=1$ in (\ref{payoff_price}) and $%
H^{1}=h^{0}=h^{1}=0$) with initial cost $1$.

\item $\frac{1}{K_{0}}S^{1}$ (i.e. $H^{1}=\frac{1}{K_{0}}$ in (\ref%
{payoff_price}) and $H^{0}=h^{0}=h^{1}=0$) with initial cost $\frac{s_{0}}{%
K_{0}}$.
\end{enumerate}

\noindent Consider now a generic superhedging strategy $(H,h)$ for the
option $g_{1}$ and suppose first that $H^{1}\geq 0$.

Observe that for every $\omega \in \lbrack 0,K_{0}]$ we have: $%
V_{T}(H,h)(\omega )=H^{0}+H^{1}\omega $ and $g_{1}(\omega )=0$. If $H^{0}<0$
there exists $\widetilde{\omega }\in \lbrack 0,K_{0}]$ such that $H^{0}+H^{1}%
\widetilde{\omega }<0=g_{1}(\widetilde{\omega })$ so that the strategy does
not dominate the payoff of $g_{1}$. Necessarily $H^{0}\geq 0$.

\begin{description}
\item[$h^{1}\neq 0$ is not optimal for super-hedging $g_{1}$] If $h^{1}\neq
0 $ we necessarily have $h^{1}\geq 0$, otherwise $V_{T}(H,h)(\omega )<0$ for
$\omega $ large enough (because of the super-linearity of $f^{1}$) and $(H,h)
$ is not a super-hedge for $g_{1}$. Since $f^{1}(x)=0$ on $(K_{0},K_{0}+2)$
and $c_{1}>0$, the most convenient super-hedge is with $h^{1}=0$ (cfr Figure
2).

\item
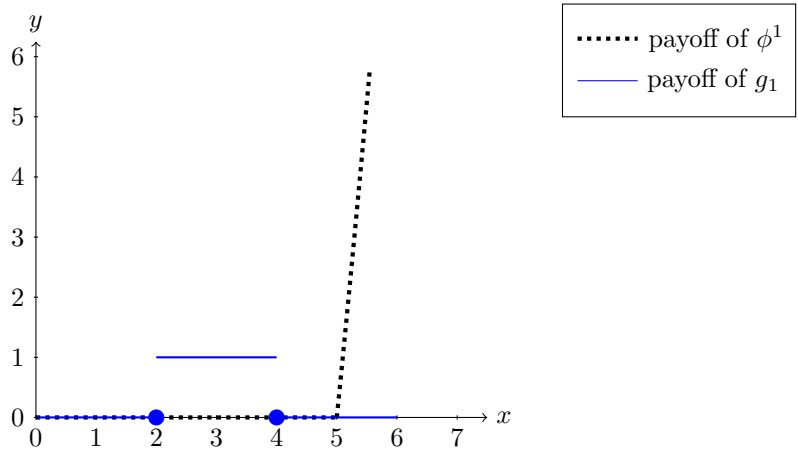
\begin{figure}[h]
\centering
\begin{tikzpicture}
  \draw[->] (0,0) -- (6,0) node[right] {$x$};
  \draw[->] (0,0) -- (0,5) node[above] {$y$};

   \draw[dotted,ultra thick,scale=0.8,domain=5:5.55,smooth,variable=\x]  plot ({\x},{\x*\x-25});
  \draw[dotted,ultra thick,scale=0.8,domain=0:5,smooth,variable=\x]  plot ({\x},{0});
   \draw[dotted,ultra thick, scale=0.8 ,domain=9:10,smooth,variable=\x] plot ({\x},6.3) node[right]{payoff of $\phi^1$};
  \draw[thick,scale=0.8,domain=0:2,smooth,variable=\x,blue]  plot ({\x},{0});
  \draw[thick,scale=0.8,domain=4:6,smooth,variable=\x,blue]  plot ({\x},{0});
  \draw[thick,scale=0.8,domain=2:4,smooth,variable=\x,blue]  plot ({\x},{1});

  \draw[scale=0.8 ,domain=9:10,smooth,variable=\x,blue] plot ({\x},5.6) node[right,black]{payoff of $g_1$};
  \draw (7,4) rectangle (10.1,5.5);

  \foreach \x in {0,1,2,3,4,5,6,7}
    \draw [scale=0.8](\x ,1pt) -- (\x ,-1pt) node[anchor=north] {$\x$};
  \foreach \y in {0,1,2,3,4,5,6}
    \draw [scale=0.8] (1pt,\y ) -- (-1pt,\y ) node[anchor=east] {$\y$};

    \draw [fill,blue] (1.6,0) circle [radius=0.1];
    \draw [fill,blue] (3.2,0) circle [radius=0.1];

\end{tikzpicture}
\caption{$\protect\phi ^{1}$ has no positive wealth on $(K_{0},K_{0}+2)$.}
\end{figure}

From now on with no loss of generality $h^{1}=0$.

\item[$h^{0}\neq 0$ is not optimal for super-hedging $g_{1}$] Since $\phi
^{0}$ has a positive payoff, if $h^{0}\neq 0$ we might take $h^{0}\geq 0$
otherwise we have a better super-hedge (at the same cost) by replacing $%
h^{0}\phi ^{0}$ with the zero portfolio. Suppose now $h^{0}>0$. By recalling
that $H^{0},H^{1}\geq 0$ we note that $V_{T}(H,h)$ as in (\ref{payoff_price}%
) satisfies
\begin{equation*}
\inf_{\omega \in (K_{0},K_{0}+2)}H^{0}+H^{1}\omega +h^{0}\phi ^{0}(\omega
)=H^{0}+H^{1}K_{0}
\end{equation*}%
so that the same super-hedge is achieved by trading only in $S^{0}$ and $%
S^{1}$. In other words with no loss of generality $h^{0}=0$ (cfr Figure 3)

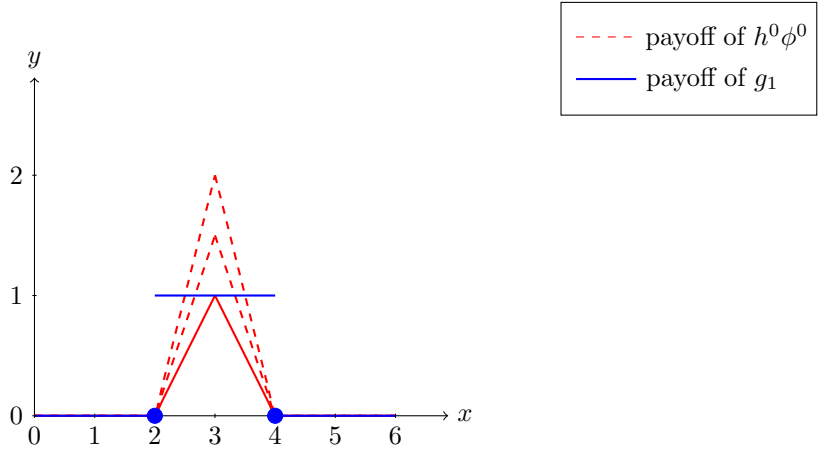
\begin{figure}[h]
\centering
\begin{tikzpicture}
  \draw[->] (0,0) -- (5.5,0) node[right] {$x$};
  \draw[->] (0,0) -- (0,4.5) node[above] {$y$};

   \draw[thick,scale=0.8,domain=0:3,smooth,variable=\x,red]  plot ({\x},{max(2*(\x-2),0)});
  \draw[thick,scale=0.8,domain=3:6,smooth,variable=\x,red]  plot ({\x},{max(2*(-\x+4),0});
  \draw[dashed, thick,scale=0.8,domain=0:3,smooth,variable=\x,red]  plot ({\x},{max(3*(\x-2),0)});
  \draw[dashed, thick,scale=0.8,domain=3:6,smooth,variable=\x,red]  plot ({\x},{max(3*(-\x+4),0});
  \draw[dashed, thick,scale=0.8,domain=0:3,smooth,variable=\x,red]  plot ({\x},{max(4*(\x-2),0)});
  \draw[dashed, thick,scale=0.8,domain=3:6,smooth,variable=\x,red]  plot ({\x},{max(4*(-\x+4),0});
  \draw[thick, scale=0.8 ,domain=9:10,smooth,variable=\x,blue] plot ({\x},5.6) node[right,black]{payoff of $g_1$};
  \draw[thick,scale=0.8,domain=0:2,smooth,variable=\x,blue]  plot ({\x},{0});
  \draw[thick,scale=0.8,domain=4:6,smooth,variable=\x,blue]  plot ({\x},{0});
  \draw[thick,scale=0.8,domain=2:4,smooth,variable=\x,blue]  plot ({\x},{2});

  \draw[dashed,scale=0.8 ,domain=9:10,smooth,variable=\x,red] plot ({\x},6.3) node[right,black]{payoff of $h^0\phi^0$};
  \draw (7,4) rectangle (10.4,5.5);

  \foreach \x in {0,1,2,3,4,5,6}
    \draw [scale=0.8](\x ,1pt) -- (\x ,-1pt) node[anchor=north] {$\x$};
  \foreach \y in {0,1,2}
    \draw [scale=0.8] (1pt,2*\y ) -- (-1pt,2*\y ) node[anchor=east] {$\y$};

    \draw [fill,blue] (1.6,0) circle [radius=0.1];
    \draw [fill,blue] (3.2,0) circle [radius=0.1];

\end{tikzpicture}
\caption{$h^{0}\protect\phi ^{0}$ does not dominate $g_{1}$ on $(K_{0},K_{0}+%
\protect\varepsilon )$ for any $\ h^{0}$ with $\protect\varepsilon =\protect%
\varepsilon (h^{0})$}
\end{figure}
\end{description}

We finally discuss the case $H^{1}<0$.\newline
This is, in general, a more expensive choice for the strategy $(H,h)$.
Indeed we have, for instance, that for $\widetilde{\omega }=K_{0}+1$, $%
H^{1}S^{1}(\widetilde{\omega })=H^{1}(K_{0}+1)<0$ while $g_{1}(\widetilde{%
\omega })=1$. Since for any strategy $(H,h)\in \mathbb{R}^{4}$, $V_{T}(H,h)(%
\widetilde{\omega })=H^{0}+H^{1}\widetilde{\omega }$ we need $H^{0}\geq
1-H^{1}(K_{0}+1)$, hence, the initial price $V_{0}(H,h)\geq
1-H^{1}(K_{0}+1-s_{0})$. By choosing the parameters $s_{0},K_{0}$ such that $%
K_{0}+1-s_{0}<0$ any superhedging strategy with $H^{1}<0$ is more expensive
than the trivial super-hedge given by $H^{0}=1,H^{1}=h^{0}=h^{0}=0$. Note
moreover that in order to cover the losses in $H^{1}S^{1}$ for large value
of $\omega $ we would need to take a long position in the option $\phi ^{1}$
(whose payoff dominates $S^{1}$) for an additional cost of $h^{1}c_{1}>0$
with $h^{1}>-H^{1}>0$. \newline

We can conclude that the cheapest super-replicating strategies are, in
general, given by $H^{0}S^{0}+H^{1}S^{1}$ with $H^{0},H^{1}\geq 0$ and it is
easy to see that
\begin{equation*}
\pi _{\Omega }(g_{1})=\min \left\{ \frac{s_{0}}{K_{0}},1\right\}=\frac{s_{0}%
}{K_{0}} >0.
\end{equation*}
\end{proof}

\section{Technical results and proofs}

\label{proofProp}

Recall that $\{\mathcal{F}_{t}\}_{t\in I}$ is the universal filtration which
satisfies in particular that $\mathcal{F}_{t}$ contains the family of
analytic sets of $(\Omega ,\mathcal{F}_{t}^{S})$ for any $t\in I$.

We indicate by $Mat(d\times (T+1);\mathbb{R})$ the space of $d\times (T+1)$
matrices with real entries representing the set of all the possible
trajectories of the price process: for every $\omega \in \Omega $ we have $%
(S_{0}(\omega ),S_{1}(\omega ),...,S_{T}(\omega ))\in Mat(d\times (T+1);%
\mathbb{R})$. Fix $t\leq T$: we indicate $S_{0:t}=(S_{0},S_{1},...,S_{t})$
and recall that $S_{0:t}^{-1}(A)=\{\omega \in \Omega \mid S_{0:t}(\omega
)\in A\}$ for $A\subset Mat(d\times (t+1);\mathbb{R}).$ We set $\Delta
S_{t}:=S_{t}-S_{t-1}$, $t=1,...,T.$

\subsection{ $\Omega _{\ast}$ and $\Omega _{\Phi}$ are analytic sets}

\begin{lemma}
\label{analytic}The set $\mathcal{P}_{f}=\{P\in \mathcal{P}\mid P\text{ has
finite support}\}$ is an analytic subset of $\mathcal{P}$ endowed with the
sigma-algebra generated by the $\sigma (\mathcal{P},C_{b})$ topology.
\end{lemma}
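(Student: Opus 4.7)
The plan is to exhibit $\mathcal{P}_f$ as a countable union of continuous images of Polish spaces, hence as a countable union of analytic sets, and therefore analytic.

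First I would recall that since $\Omega$ is Polish, the space $\mathcal{P}$ of Borel probability measures on $\Omega$, equipped with the weak topology $\sigma(\mathcal{P},C_b)$, is itself Polish. In particular it is a Suslin space, so the notion of ``analytic subset'' is well-defined and the standard stability properties (countable unions and continuous images) apply.

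Next, for each $n \geq 1$, I would introduce the parameter space
\begin{equation*}
\Delta_n := \Bigl\{\lambda = (\lambda_1,\dots,\lambda_n) \in [0,1]^n \;\Bigl|\; \sum_{i=1}^{n}\lambda_i = 1\Bigr\}
\end{equation*}
and the map
\begin{equation*}
\Psi_n : \Delta_n \times \Omega^n \longrightarrow \mathcal{P}, \qquad \Psi_n(\lambda,\omega_1,\dots,\omega_n) := \sum_{i=1}^{n} \lambda_i \,\delta_{\omega_i}.
\end{equation*}
The domain $\Delta_n \times \Omega^n$ is Polish as a closed subset of a finite product of Polish spaces. Continuity of $\Psi_n$ with respect to the weak topology on $\mathcal{P}$ is immediate: for any $f \in C_b(\Omega)$ the map
\begin{equation*}
(\lambda,\omega_1,\dots,\omega_n) \longmapsto \int f\,d\Psi_n(\lambda,\omega) = \sum_{i=1}^{n}\lambda_i f(\omega_i)
\end{equation*}
is jointly continuous, so $\Psi_n$ is continuous into $(\mathcal{P},\sigma(\mathcal{P},C_b))$. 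By definition of $\mathcal{P}_f$, the image $\Psi_n(\Delta_n \times \Omega^n)$ equals the set of probability measures whose support has at most $n$ points, and
\begin{equation*}
\mathcal{P}_f = \bigcup_{n \geq 1} \Psi_n(\Delta_n \times \Omega^n).
\end{equation*}

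Finally, I would invoke the two standard closure properties: the continuous image of a Polish space in a Hausdorff Suslin space is an analytic set, and a countable union of analytic sets is analytic. Combining these with the displayed identity above yields that $\mathcal{P}_f$ is analytic in $\mathcal{P}$. I do not expect any real obstacle here; the only point requiring a touch of care is verifying that $\Psi_n$ is continuous into the weak topology, which is handled by testing against $C_b$ as above. A minor subtlety is that one cannot in general hope for $\mathcal{P}_f$ to be Borel (the ``support is finite'' predicate is not obviously Borel), which is precisely why the analytic conclusion is the natural one.
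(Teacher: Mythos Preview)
Your proof is correct and follows essentially the same approach as the paper: express $\mathcal{P}_f$ as a countable union of images of the continuous maps $(\lambda,\omega_1,\dots,\omega_n)\mapsto\sum_i\lambda_i\delta_{\omega_i}$ defined on the Polish spaces $\Delta_n\times\Omega^n$, and conclude via the standard stability of analytic sets under continuous images and countable unions. The only cosmetic difference is that the paper parametrizes by $E^n\times\Delta_n$ with $E=\{\delta_\omega:\omega\in\Omega\}\subset\mathcal{P}$ rather than $\Omega^n\times\Delta_n$, which amounts to composing with the homeomorphism $\omega\mapsto\delta_\omega$.
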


\begin{proof}
Set $E=\{\delta _{\omega }\mid \omega \in \Omega \}$ which is $\sigma (%
\mathcal{P},C_{b})$ closed (Th. 15.8 \cite{Aliprantis}) and observe that $%
\mathcal{P}_{f}$ is the convex hull of $E$. Consider for any $n\in \mathbb{N}
$ the simplex $\Delta _{n}\subset \mathbb{R}^{n}$ and the map
\begin{equation*}
\gamma _{n}:E^{n}\times \Delta _{n}\longrightarrow \mathcal{P}_{f}
\end{equation*}%
defined by $\gamma _{n}\left( \delta _{\omega _{1}},\ldots ,\delta _{\omega
_{n}},\lambda _{1},\dots ,\lambda _{n}\right) =\sum_{i=1}^{n}\lambda
_{i}\delta _{\omega _{i}}$ which is a continuous function in the product
topology. Since $E^{n}\times \Delta _{n}$ is closed in the product topology
of the Borel Space $\mathcal{P}^{n}\times \mathbb{R}^{n},$ then the image $%
\gamma _{n}\left( E^{n}\times \Delta _{n}\right) $ is analytic (Proposition
7.40 \cite{BS78}). Finally we notice that $\mathcal{P}_{f}=\bigcup_{n}\gamma
_{n}\left( E^{n}\times \Delta _{n}\right) $ which is therefore analytic,
being countable union of analytic sets.
\end{proof}

\begin{definition}
Let $\mathcal{L}^{\infty }(\Omega ,\mathcal{F}):=\left\{ f\in \mathcal{L}%
(\Omega ,\mathcal{F})\mid f\text{ is bounded}\right\} $. A subset $\mathcal{%
U\subset P}_{f}$ is countably determined if there exists a countable set $%
L\subseteq \mathcal{L}^{\infty }(\Omega ,\mathcal{F})$ such that
\begin{equation*}
\mathcal{U}:=\left\{ \mu \in \mathcal{P}_{f}\mid E_{\mu }[f]\leq 0,\forall
f\in L\right\}
\end{equation*}
\end{definition}

\begin{lemma}
If $\mathcal{U}\subseteq \mathcal{P}_{f}$ is countably determined then it is
analytic.
\end{lemma}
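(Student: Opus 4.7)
The plan is to exhibit $\mathcal{U}$ as the intersection of the analytic set $\mathcal{P}_{f}$ with a Borel subset of $\mathcal{P}$, and then invoke the standard fact that the intersection of an analytic set with a Borel set is analytic.

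First I would fix $f\in L$ and examine the evaluation map $e_{f}:\mathcal{P}\to\mathbb{R}$, $\mu\mapsto E_{\mu}[f]$, where $\mathcal{P}$ carries the Borel $\sigma$-algebra generated by $\sigma(\mathcal{P},C_{b})$. When $f\in C_{b}(\Omega)$, $e_{f}$ is continuous by the very definition of the weak topology. For a general bounded $\mathcal{F}$-measurable $f$ I would pass by a monotone class / standard measure-theoretic induction argument: the family $\mathcal{D}:=\{f\in\mathcal{L}^{\infty}(\Omega,\mathcal{F})\mid e_{f}\text{ is Borel}\}$ contains $C_{b}(\Omega)$, is closed under bounded monotone pointwise limits by the Monotone Convergence Theorem applied inside $E_{\mu}$ together with the fact that pointwise limits of Borel functions are Borel, and is a vector space; hence $\mathcal{D}$ coincides with all bounded $\mathcal{F}$-measurable functions (this is essentially Proposition 7.25 of Bertsekas--Shreve \cite{BS78}).

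Once this is in hand, for each $f\in L$ the set
\[
A_{f}:=\{\mu\in\mathcal{P}\mid E_{\mu}[f]\leq 0\}=e_{f}^{-1}((-\infty,0])
\]
is Borel in $\mathcal{P}$. Since $L$ is countable,
\[
A:=\bigcap_{f\in L}A_{f}
\]
is again a Borel subset of $\mathcal{P}$. By construction, $\mathcal{U}=\mathcal{P}_{f}\cap A$.

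Finally, Lemma \ref{analytic} says $\mathcal{P}_{f}$ is analytic in $\mathcal{P}$, and it is a classical fact that the intersection of an analytic set with a Borel set is analytic (indeed, Borel sets are analytic, and the class of analytic sets is closed under countable intersections, see e.g.\ Proposition 7.38 in \cite{BS78}). Therefore $\mathcal{U}$ is analytic. The main delicate point is step one, the Borel measurability of $\mu\mapsto E_{\mu}[f]$ for bounded $\mathcal{F}$-measurable (rather than continuous) $f$; once that is established, the rest is bookkeeping.
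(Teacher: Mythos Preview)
Your proposal is correct and essentially identical to the paper's own proof: the paper also writes $\mathcal{U}=\bigcap_{n}F_{n}^{-1}(-\infty,0]\cap\mathcal{P}_{f}$ with $F_{n}(\mu)=\int f_{n}\,d\mu$, invokes Borel measurability of the integration map (citing Theorem~15.13 in \cite{Aliprantis} rather than your monotone-class sketch), and concludes analyticity from the countable intersection with the analytic set $\mathcal{P}_{f}$. The only cosmetic difference is the reference used for the measurability of $\mu\mapsto E_{\mu}[f]$.
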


\begin{proof}
For each $f_{n}\in L$ define
\begin{equation*}
F_{n}:\mathcal{P}\rightarrow \mathbb{R}\text{ such that }F_{n}(\mu
)=\int_{\Omega }f_{n}d\mu .
\end{equation*}%
From Theorem 15.13 in \cite{Aliprantis}, $F_{n}$ is Borel measurable so that
\begin{equation*}
\mathcal{U}:=\left\{ \mu \in \mathcal{P}_{f}\mid E_{\mu }[f_{n}]\leq 0\text{
for all }n\in \mathbb{N}\right\} =\bigcap_{n\in \mathbb{N}%
}(F_{n})^{-1}(-\infty ,0]\cap \mathcal{P}_{f}
\end{equation*}%
is analytic, being countable intersection of analytic sets.
\end{proof}

\begin{lemma}
\label{PZ} Let $Z_{1}(\omega ):=\max_{i=1,\dots ,d}\max_{u=0,\dots
,T}|S_{u}^{i}(\omega )|$, $Z_{2}(\omega ):=\max_{j=1,\dots ,k}|\phi
^{j}(\omega )|$ and $Z=\max (Z_{1},Z_{2})$ then
\begin{eqnarray*}
\mathcal{P}_{Z} &=&\left\{ \mu \in \mathcal{P}_{f}\mid \exists \,Q\in
\mathcal{M}_{f}\text{ such that }\frac{dQ}{d\mu }=\frac{c(\mu )}{1+Z}\right\}
\\
\mathcal{P}_{Z,\Phi } &=&\left\{ \mu \in \mathcal{P}_{f}\mid \exists \,Q\in
\mathcal{M}_{\Phi }\text{ such that }\frac{dQ}{d\mu }=\frac{c(\mu )}{1+Z}%
\right\}
\end{eqnarray*}%
are analytic subsets of $\mathcal{P}$ where $c(\mu )=E_{\mu }\left[
(1+Z)^{-1}\right] ^{-1}$.
\end{lemma}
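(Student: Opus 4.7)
My plan is to show that both $\mathcal{P}_{Z}$ and $\mathcal{P}_{Z,\Phi}$ are countably determined subsets of $\mathcal{P}_{f}$; the previous lemma then immediately gives that they are analytic subsets of $\mathcal{P}$.

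I would first rewrite the defining conditions intrinsically in terms of $\mu$. For every $\mu \in \mathcal{P}_{f}$, the prescribed density $dQ/d\mu = c(\mu)/(1+Z)$ automatically produces a probability measure $Q$ with the same finite support as $\mu$, since $Z<\infty$ and $c(\mu)$ is exactly the normalizing constant. Hence $\mu \in \mathcal{P}_{Z}$ iff the resulting $Q$ lies in $\mathcal{M}_{f}$, which, since $\mathbb{F}$ and $\mathbb{F}^{S}$ coincide modulo $Q$-null sets, is equivalent to $S$ being a $(Q,\mathbb{F}^{S})$-martingale. Changing measure back to $\mu$ and using $c(\mu)>0$, this martingale condition reads
\[
E_{\mu}\!\left[\frac{(S_{t}^{i}-S_{t-1}^{i})\,f(S_{0:t-1})}{1+Z}\right]=0 \qquad \forall\,t,\,i,\ \forall\,f\in C_{b}(\mathbb{R}^{d\cdot t}),
\]
and each integrand is Borel and bounded by $2\|f\|_{\infty}$ (since $|S_{t}^{i}-S_{t-1}^{i}|\le 2Z_{1}\le 2Z$), hence lies in $\mathcal{L}^{\infty}(\Omega,\mathcal{F})$. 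For $\mathcal{P}_{Z,\Phi}$ we also require $E_{\mu}[\phi^{j}/(1+Z)]=0$ for $j=1,\dots,k$, with integrands bounded by $1$ since $|\phi^{j}|\le Z$.

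The crux, and the step I expect to be the main obstacle, is reducing this \emph{a priori} uncountable family of test functions to a countable one. Here I would argue as follows: for each $t,i$ the finite signed Borel measure $\eta_{t,i}:=(S_{0:t-1})_{*}\bigl(\tfrac{S_{t}^{i}-S_{t-1}^{i}}{1+Z}\,\mu\bigr)$ on the locally compact Polish space $\mathbb{R}^{d\cdot t}$ vanishes iff it annihilates $C_{0}(\mathbb{R}^{d\cdot t})$ (Riesz representation), and $C_{0}(\mathbb{R}^{d\cdot t})$ is separable in the sup norm, so I can fix once and for all a countable sup-dense family $\{f_{n}^{t}\}_{n}$. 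Testing $\eta_{t,i}$ against this family suffices, because the sup-approximation combined with $\|(S_{t}^{i}-S_{t-1}^{i})/(1+Z)\|_{\infty}\le 2$ controls the error, and a monotone class argument then recovers the full Borel-bounded martingale condition equivalent to $Q\in\mathcal{M}_{f}$.

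Finally I would collect everything into the countable set
\[
L:=\Bigl\{\pm\tfrac{(S_{t}^{i}-S_{t-1}^{i})\,f_{n}^{t}(S_{0:t-1})}{1+Z}\,:\,n\in\mathbb{N},\ 1\le t\le T,\ 1\le i\le d\Bigr\}\cup\Bigl\{\pm\tfrac{\phi^{j}}{1+Z}\,:\,1\le j\le k\Bigr\}\subset\mathcal{L}^{\infty}(\Omega,\mathcal{F}),
\]
the $\pm$ signs converting each required equality $E_{\mu}[h]=0$ into the pair of inequalities $E_{\mu}[\pm h]\le 0$ used in the definition of ``countably determined''. Then $\mathcal{P}_{Z,\Phi}=\{\mu\in\mathcal{P}_{f}\mid E_{\mu}[h]\le 0\ \forall h\in L\}$, and $\mathcal{P}_{Z}$ admits the analogous description after dropping the second batch of test functions, so both sets are countably determined and the previous lemma concludes.
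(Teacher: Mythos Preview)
Your proposal is correct and follows the same overall strategy as the paper: both arguments show that $\mathcal{P}_{Z}$ and $\mathcal{P}_{Z,\Phi}$ are countably determined and then invoke the preceding lemma. The only difference lies in the choice of countable test family. The paper uses the functions
\[
f_{n,m}^{i}=\frac{S_{t}^{i}-S_{t-1}^{i}}{1+Z}\,\mathbf{1}_{A_{n,m}},\qquad A_{n,m}=\{S_{0:t-1}\in B_{1/m}(q_{n})\},\ q_{n}\in Mat(d\times t;\mathbb{Q}),
\]
and exploits the finite support of $\mu$ directly: the finitely many $\mathcal{F}_{t-1}$-atoms of $\mu$ can be separated by disjoint rational balls, so testing against the $\mathbf{1}_{A_{n,m}}$ already captures the conditional expectation on each atom. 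Your route replaces these indicators by a sup-dense sequence in $C_{0}(\mathbb{R}^{d\cdot t})$ and argues via the Riesz representation that the pushforward signed measure $\eta_{t,i}$ must vanish. Both work; the paper's choice is more combinatorial and tied to the finite-support structure, while yours is a clean functional-analytic reduction that does not rely on separating atoms and would go through even for general $\mu\in\mathcal{P}$ with $|\eta_{t,i}|$ finite.
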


\begin{proof}
Assume $\mathcal{P}_{Z}\neq \varnothing $ (resp. $\mathcal{P}_{Z,\Phi }\neq
\varnothing $) otherwise there is nothing to prove. Fix any $t\in \left\{
1,...,T\right\} $. Let $Mat(d\times t;\mathbb{Q})$ be the countable set of $%
d\times t$ matrices with rational entries and denote its elements by $q_{n},$
$n\in \mathbb{N}$. For $q_{n}\in Mat(d\times t;\mathbb{Q})$, consider the
set $\{A_{n,m}\}$ with $A_{n,m}=\{\omega \in \Omega \mid S_{0:t-1}\in
B_{1/m}( q_{n})\}\in \mathcal{F}_{t-1}$, where $B_{1/m}( q_{n})$ denotes the
ball (in the Euclidean norm of $Mat(d\times t;\mathbb{R})$) with radius $1/m$
centered in $q_n$. Define
\begin{eqnarray}
f_{n,m}^{i}:= &&\left( \frac{S_{t}^{i}-S_{t-1}^{i}}{1+Z}\right) \mathbf{1}%
_{A_{n,m}}\in \mathcal{L}^{\infty }(\Omega ,\mathcal{F}),  \notag \\
g^{j}:= &&\left( \frac{\phi ^{j}}{1+Z}\right) \in \mathcal{L}^{\infty
}(\Omega ,\mathcal{F}).  \label{gj}
\end{eqnarray}%
The following sets
\begin{eqnarray*}
\mathcal{U}:= &&\left\{ \mu \in \mathcal{P}_{f}\mid E_{\mu
}[f_{n,m}^{i}]=0\;\forall i,n,m\right\} \\
\mathcal{U}_{\Phi }:= &&\left\{ \mu \in \mathcal{P}_{f}\mid E_{\mu
}[f_{n,m}^{i}]=0\text{ and }E_{\mu }[g^{j}]=0\;\forall i,n,m,j\right\}
\end{eqnarray*}%
are analytic since they are countably determined. We now show that $\mathcal{%
U}=\mathcal{P}_{Z}$ and $\mathcal{U}_{\Phi }=\mathcal{P}_{Z,\Phi }$ and this
will complete the proof. \newline
For any fixed $\mu \in \mathcal{U}$ we have by construction:
\begin{equation}
\int_{\Omega }\frac{S_{t}^{i}}{1+Z}\mathbf{1}_{A_{n,m}}d\mu =\int_{\Omega }%
\frac{S_{t-1}^{i}}{1+Z}\mathbf{1}_{A_{n,m}}d\mu \quad \text{ for every }%
A_{n,m}.  \label{integral}
\end{equation}%
Consider the finite set of matrices $\{s_{j}\}_{j=1}^{h}:=\{S_{0:t-1}(\omega
)\in Mat(d\times t;\mathbb{R})\mid \omega \in supp(\mu )\}$ where $h=h(\mu )$
depends on $\mu $. For every $j=1,\ldots,h$ there exists $q_{n(j)},m(j)$
such that $s_j\in B_{1/m(j)}(q_{n(j)})$ and the balls $B_{1/m(j)}(q_{n(j)})$
are all disjoint. Therefore $A_{n(j),m(j)}$ is such that
\begin{equation*}
\mu (B_{j})=\mu \left( A_{n(j),m(j)}\right)
\end{equation*}%
where $B_{j}:=\{S_{0:t-1}=s_{j}\}$. Since $\{B_{j}\}_{j=1}^h$ are atoms for $%
\mu$ in $\mathcal{F}_{t-1}$, we conclude that
\begin{equation*}
\int_{\Omega }\frac{S_{t}^{i}}{1+Z}\mathbf{1}_{B_{j}}d\mu =\int_{\Omega }%
\frac{S_{t-1}^{i}}{1+Z}\mathbf{1}_{B_{j}}d\mu \text{\quad for every }%
j=1,\dots ,h
\end{equation*}%
and $E_{\mu }\left( \frac{S_{t}^{i}}{1+Z}\mid \mathcal{F}_{t-1}\right)
=E_{\mu }\left( \frac{S_{t-1}^{i}}{1+Z}\mid \mathcal{F}_{t-1}\right) $.
Define $Q$ by $\frac{dQ}{d\mu }:=\frac{c}{1+Z}$ where $c:=c(\mu )>0$ is the
normalization constant. Then $,$ $Q\sim \mu $, $Q\in \mathcal{P}_{f}$ and:
\begin{equation}
E_{\mu }\left( \frac{S_{t}^{i}}{1+Z}\mid \mathcal{F}_{t-1}\right) =E_{\mu
}\left( \frac{S_{t-1}^{i}}{1+Z}\mid \mathcal{F}_{t-1}\right) \text{ if and
only if }E_{Q}\left( S_{t}^{i}\mid \mathcal{F}_{t-1}\right) =S_{t-1}^{i}.
\label{changemeasure}
\end{equation}%
Thus we can conclude $Q\in \mathcal{M}_{f}$ and $\mathcal{U}\subseteq
\mathcal{P}_{Z}$. Take now $\mu \in \mathcal{P}_{Z}$ then there exists $Q$
such that $E_{Q}\left( S_{t}^{i}\mid \mathcal{F}_{t-1}\right) =S_{t-1}^{i}$
and $\frac{dQ}{d\mu }=\frac{c}{1+Z}$. From Equation \eqref{changemeasure} we
have that condition \eqref{integral} holds and hence $\mu \in \mathcal{U}$.
\newline
Recall that $\mathcal{M}_{\Phi }$ is defined in (\ref{Mphi}) and consider
now $\mu \in \mathcal{U}_{\Phi }\subseteq \mathcal{U}$. Then there exists $%
Q\in \mathcal{M}_{f}$ such that $\frac{dQ}{d\mu }=\frac{c(\mu )}{1+Z}$.
Moreover $E_{\mu }[g^{j}]=0$ for every $j=1,\dots ,k$ so that, by (\ref{gj}%
), $E_{Q}[\phi ^{j}]=0$. In this way $\mathcal{U}_{\Phi }\subseteq \mathcal{P%
}_{Z,\Phi }$. Take now $\mu \in \mathcal{P}_{Z,\Phi }$ then $\mu \in
\mathcal{P}_{Z}$ from the previous part of the proof. Moreover there exists $%
Q\in \mathcal{M}_{\Phi }$ such that $E_{Q}\left[ \phi ^{j}\right] =0$ and $%
\frac{dQ}{d\mu }=\frac{c}{1+Z}$. Again by (\ref{gj}) we have $E_{\mu
}[g^{j}]=0$ for every $j=1,\dots ,k$ and hence $\mu \in \mathcal{U}_{\Phi }$.
\end{proof}

\begin{proposition}
\label{propAn} $\Omega _{\ast }$ and $\Omega _{\Phi }$ are
analytic subsets of $(\Omega ,\mathcal{F})$.
\end{proposition}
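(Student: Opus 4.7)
The plan is to express $\Omega_\ast$ and $\Omega_\Phi$ as projections of analytic subsets of a product Polish space, using the analyticity of $\mathcal{P}_Z$ and $\mathcal{P}_{Z,\Phi}$ established in Lemma \ref{PZ} and the explicit parametrization of $\mathcal{P}_f$ provided in Lemma \ref{analytic}. The key observation is that the Radon--Nikodym relation $\frac{dQ}{d\mu}=\frac{c(\mu)}{1+Z}$ forces $Q\sim\mu$, hence $\mathrm{supp}(Q)=\mathrm{supp}(\mu)$. Consequently
\begin{equation*}
\Omega_\ast=\bigcup_{\mu\in\mathcal{P}_Z}\mathrm{supp}(\mu),\qquad \Omega_\Phi=\bigcup_{\mu\in\mathcal{P}_{Z,\Phi}}\mathrm{supp}(\mu).
\end{equation*}

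Next, I would combine these identities with the continuous surjections $\gamma_n:E^n\times\Delta_n\to\mathcal{P}_f$ from the proof of Lemma \ref{analytic}, so that $\mathcal{P}_f=\bigcup_{n}\gamma_n(E^n\times\Delta_n)$. For each $n$, the set $A_n:=\gamma_n^{-1}(\mathcal{P}_Z)\subseteq E^n\times\Delta_n$ is the preimage of an analytic set under a Borel map, hence analytic; intersecting with the open set $\{\lambda_i>0\}$ keeps it analytic. If $\pi_i:E^n\times\Delta_n\to E\cong\Omega$ denotes the projection onto the $i$-th copy of $E$ (recalling that $\omega\mapsto\delta_\omega$ is a homeomorphism between $\Omega$ and $E$), then an atom $\omega_i$ lies in $\mathrm{supp}\bigl(\gamma_n(\delta_{\omega_1},\dots,\delta_{\omega_n},\lambda_1,\dots,\lambda_n)\bigr)$ precisely when $\lambda_i>0$. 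Therefore
\begin{equation*}
\Omega_\ast=\bigcup_{n\in\mathbb{N}}\bigcup_{i=1}^{n}\pi_i\bigl(A_n\cap\{\lambda_i>0\}\bigr),
\end{equation*}
which is a countable union of continuous images of analytic sets, and hence analytic. The identical argument, with $\mathcal{P}_{Z,\Phi}$ in place of $\mathcal{P}_Z$, yields that $\Omega_\Phi$ is analytic.

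The only genuine point to verify is the equivalence between membership in $\Omega_\ast$ (resp.\ $\Omega_\Phi$) and being an atom of some $\mu\in\mathcal{P}_Z$ (resp.\ $\mathcal{P}_{Z,\Phi}$); this is immediate once one notes that in Lemma \ref{PZ} the measure $Q$ constructed from $\mu$ is strictly equivalent to $\mu$ (both measures live on a common finite support), and conversely, given $Q\in\mathcal{M}_f$, one recovers $\mu\in\mathcal{P}_Z$ by setting $d\mu=c'(1+Z)\,dQ$ with a suitable normalizing constant $c'>0$. The main technical worry is making sure that ``$\omega$ is an atom of the probability produced by $\gamma_n$'' is a genuinely analytic condition on the parameter space; this is handled cleanly by using the projection $\pi_i$ restricted to the open strip $\{\lambda_i>0\}$ rather than trying to describe the support relation $\mu(\{\omega\})>0$ directly on $\Omega\times\mathcal{P}$. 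Everything else is routine stability of the analytic class under countable unions, Borel preimages, and continuous images in Polish spaces.
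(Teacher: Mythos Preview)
Your argument is correct and genuinely different from the paper's. The paper builds a Souslin scheme on $\Omega\times\mathcal{P}$: for a fixed countable dense set it forms nested closed balls $A_{\mathbf{n}(1),\ldots,\mathbf{n}(k)}$ of shrinking radii, pairs each with the analytic set $\{Q\in\mathcal{P}_Z\mid Q(A_{\mathbf{n}(1),\ldots,\mathbf{n}(k)})>0\}$, takes the nucleus $\bigcup_{\mathbf{n}\in\mathbb{N}^{\mathbb{N}}}\bigcap_k(\cdot)$, and finally projects onto $\Omega$. This encodes the awkward relation $\mu(\{\omega\})>0$ on $\Omega\times\mathcal{P}$ via a limiting procedure with closed sets. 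You bypass the Souslin machinery entirely by exploiting the explicit finite-dimensional parametrization $\gamma_n:E^n\times\Delta_n\to\mathcal{P}_f$ already introduced in Lemma~\ref{analytic}: pulling $\mathcal{P}_Z$ back through the continuous map $\gamma_n$ gives an analytic set in a Polish space, and the condition ``$\omega$ is an atom of $\mu$'' becomes the transparent open constraint $\lambda_i>0$ followed by a coordinate projection. Your route is shorter and more elementary for the problem at hand; the paper's route is more intrinsic (it does not unpack the internal structure of $\mathcal{P}_f$) and would adapt to classes of measures that are not finitely supported. Both rely on Lemma~\ref{PZ} in the same essential way, and your verification that $\Omega_\ast=\bigcup_{\mu\in\mathcal{P}_Z}\mathrm{supp}(\mu)$ via the equivalence $Q\sim\mu$ and the inverse density $d\mu=c'(1+Z)\,dQ$ is exactly the bridge needed.
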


\begin{proof}
Consider the Baire space $\mathbb{N}^{\mathbb{N}}$ of all
sequences of natural numbers. In this proof we denote by
$B_{\varepsilon }(\omega )$ the
closed ball of radius $\varepsilon $, centered in $\omega $ in $(\Omega ,d)$%
. \newline Consider a dense subset $\{\omega _{i}\}_{i=1}^{\infty
}$ of $\Omega $. For any $\mathbf{n}=(n_{1},...,n_{k},...)\in
\mathbb{N}^{\mathbb{N}}$ we denote
by $\mathbf{n}(1),\dots ,\mathbf{n}(k)$ the first $k$ terms (i.e. $%
n_{1},...,n_{k}$). Define
\begin{equation*}
A_{\mathbf{n}(1)}:=B_{1}(\omega _{\mathbf{n}(1)}).
\end{equation*}%
Let now $\{\omega _{\mathbf{n}(1),i}\}_{i=1}^{\infty }$ a dense subset of $%
A_{\mathbf{n}(1)}$ we define
\begin{equation*}
A_{\mathbf{n}(1),\mathbf{n}(2)}:=B_{\frac{1}{2}}(\omega _{\mathbf{n}(1),%
\mathbf{n}(2)})\cap A_{\mathbf{n}(1)}.
\end{equation*}%
At the $k^{th}$ step we shall have $\{\omega _{\mathbf{n}(1),\ldots ,\mathbf{%
n}(k-1),i}\}_{i=1}^{\infty }$ a dense subset of $A_{\mathbf{n}(1),\ldots ,%
\mathbf{n}(k-1)}$ and we define the closed set
\begin{equation*}
A_{\mathbf{n}(1),\ldots ,\mathbf{n}(k)}:=B_{\frac{1}{k}}(\omega _{\mathbf{n}%
(1),\ldots ,\mathbf{n}(k)})\cap A_{\mathbf{n}(1),\ldots
,\mathbf{n}(k-1)}.
\end{equation*}%
Notice that for any $\omega \in \Omega $ there will exists an
$\mathbf{n}\in \mathbb{N}^{\mathbb{N}}$ such that
\begin{equation}
\bigcap _{k\in \mathbb{N}}A_{\mathbf{n}(1),\ldots
,\mathbf{n}(k)}=\{\omega \}.  \label{intA}
\end{equation}%
We consider the \emph{nucleus} of the Souslin scheme given by

\begin{equation}
\bigcup_{\mathbf{n}\in \mathbb{N}^{\mathbb{N}}}\bigcap_{k\in \mathbb{N}}A_{%
\mathbf{n}(1),\ldots ,\mathbf{n}(k)}\times \{Q\in \mathcal{P}_{Z}\mid Q(A_{%
\mathbf{n}(1),\ldots ,\mathbf{n}(k)})>0\}  \label{nucleus}
\end{equation}%
Observe that $A_{\mathbf{n}(1),\ldots ,\mathbf{n}(k)}$ closed in
$\Omega $
implies $\{Q\in \mathcal{P}\mid Q(A_{\mathbf{n}(1),\ldots ,\mathbf{n}%
(k)})\geq \frac{1}{m}\}$ is $\sigma (\mathcal{P},C_{b})$-closed
from Corollary 15.6 in \cite{Aliprantis}. Therefore
\begin{equation*}
\{Q\in \mathcal{P}\mid Q(A_{\mathbf{n}(1),\ldots ,\mathbf{n}%
(k)})>0\}=\bigcup_{m}\{Q\in \mathcal{P}\mid Q(A_{\mathbf{n}(1),\ldots ,%
\mathbf{n}(k)})\geq \frac{1}{m}\}
\end{equation*}%
is Borel measurable in $(\mathcal{P},\sigma (\mathcal{P},C_{b}))$. By Lemma %
\ref{PZ} we have that $\{Q\in \mathcal{P}_{Z}\mid Q(A_{\mathbf{n}(1),\ldots ,%
\mathbf{n}(k)})>0\}$ is analytic. We can conclude that $A_{\mathbf{n}%
(1),\ldots ,\mathbf{n}(k)}\times \{Q\in \mathcal{P}_{Z}\mid Q(A_{\mathbf{n}%
(1),\ldots ,\mathbf{n}(k)})>0\}$ is an analytic subset of $\Omega
\times \mathcal{P}$ (which is a Polish space). \newline From Lemma
\ref{PZ} we observe that any $\mu \in \mathcal{P}_{Z}$ admits an
equivalent martingale measure with finite support. From $\Omega
_{\ast
}=\left\{ \omega \in \Omega \mid \exists Q\in \mathcal{M}_{f}\text{ s.t. }%
Q(\omega )>0\right\} $, if $\omega \notin \Omega _{\ast }$ then
$\omega \notin supp(\mu )$ for any $\mu \in \mathcal{P}_{Z}$.
Taking (\ref{intA})
into account, if $\omega \notin \Omega _{\ast }$ we can find a large enough $%
\bar{k}$ such that $A_{\mathbf{n}(1),\ldots
,\mathbf{n}(\bar{k})}\cap supp(\mu )=\varnothing $. We then have
\begin{equation}
\bigcap_{k\in \mathbb{N}}A_{\mathbf{n}(1),\ldots
,\mathbf{n}(k)}\times
\{Q\in \mathcal{P}_{Z}\mid Q(A_{\mathbf{n}(1),\ldots ,\mathbf{n}%
(k)})>0\}=\left\{
\begin{array}{cc}
\{\omega \}\times \mathcal{P}_{\omega } & \text{ if }\omega \in
\Omega
_{\ast } \\
\varnothing  & \text{ if }\omega \notin \Omega _{\ast }%
\end{array}%
\right. ,  \label{int:k}
\end{equation}%
where $\mathcal{P}_{\omega }=\{Q\in \mathcal{P}_{Z}\mid Q(\{\omega
\})>0\}$.
\newline
From Proposition 7.35 and Proposition 7.41 in \cite{BS78} any
kernel of a Souslin scheme of analytic sets is again an analytic
set. Then
\begin{equation*}
\bigcup_{\mathbf{n}\in \mathbb{N}^{\mathbb{N}}}\bigcap_{k\in \mathbb{N}}A_{%
\mathbf{n}(1),\ldots ,\mathbf{n}(k)}\times \{Q\in \mathcal{P}_{Z}\mid Q(A_{%
\mathbf{n}(1),\ldots ,\mathbf{n}(k)})>0\}
\end{equation*}%
is an analytic set in $\Omega \times \mathcal{P}$ whose projection on $%
\Omega $, thanks to \eqref{int:k}, is equal to $\Omega _{\ast }$.
Since the projection $\Pi :\Omega \times \mathcal{P}\rightarrow
\Omega $ is continuous we finally deduce that $\Omega _{\ast }$ is
analytic. \newline For $\Omega _{\Phi }$ repeat the same proof
replacing $\mathcal{P}_{Z}$ with $\mathcal{P}_{Z,\Phi }$.
\end{proof}

\begin{remark}
Let $\hat{\Omega}\subseteq \Omega $ be an analytic subset of $(\Omega ,%
\mathcal{F})$. An inspection of the proof shows that
\begin{align}
\hat{\Omega}_{\ast }:=& \left\{ \omega \in \hat{\Omega}\mid
\exists Q\in \mathcal{M}_{f}\text{ s.t. }Q(\hat{\Omega})=1\text{
and }Q(\omega
)>0\right\}   \label{omegaHat*} \\
\hat{\Omega}_{\Phi }:=& \left\{ \omega \in \hat{\Omega}\mid
\exists Q\in \mathcal{M}_{\Phi }\text{ s.t.
}Q(\hat{\Omega})=1\text{ and }Q(\omega )>0\right\}   \notag
\end{align}%
are also analytic subsets of $(\Omega ,\mathcal{F})$. Indeed, $\mathcal{P}_{%
\hat{\Omega}}:=\{P\in \mathcal{P}\mid P(\hat{\Omega})=1\}$ is an
analytic
subset of $\mathcal{P}$, by Proposition 7.43 in \cite{BS78}, therefore $%
\mathcal{P}_{Z}\cap \mathcal{P}_{\hat{\Omega}}$ is analytic and
one may replace in the above proof $\mathcal{P}_{Z}$ with
$\mathcal{P}_{Z}\cap \mathcal{P}_{\hat{\Omega}}$ and $\Omega
_{\ast }$ with $\hat{\Omega}_{\ast }$ to obtain the conclusion.
\end{remark}

\begin{remark}
In one-period markets ($T=1$), $\Omega _{\ast }$ is a Borel
measurable set. To see this observe that if there are no one point
arbitrages then $\Omega
_{\ast }=\Omega \in \mathcal{B}(\Omega )$ by Corollary 4.11 in \cite{BFM14}%
. When this condition is violated, there exists a strategy $H^{1}\in \mathbb{%
R}^{d}$ such that $H^{1}\cdot (S_{1}-S_{0})\geq 0$ and
$B^{1}:=\{\omega \in \Omega \mid H^{1}\cdot (S_{1}(\omega
)-S_{0})>0\}$ is non-empty and Borel
measurable. Indeed $B^{1}=(f\circ S_{1})^{-1}(0,\infty )$ with $%
f(x):=H^{1}\cdot (x-S_{0})$ continuous and $S_{1}$ Borel
measurable. Observe now that, restricted to the set $\Omega
\setminus B^{1}$, one asset is
redundant (say $S^{d}$) so that the market can be described by $%
(S^{0},\ldots ,S^{d-1})$. If there is no one point arbitrage we
have $\Omega _{\ast }=\Omega \setminus B^{1}\in \mathcal{B}(\Omega
)$. Otherwise we can iteratively repeat the same argument to
construct $B^{i}:=\{\omega \in \Omega \setminus \cup
_{j=1}^{i-1}B^{j}\mid H^{i}\cdot (S_{1}(\omega )-S_{0})>0\}\in
\mathcal{B}(\Omega )$ and dropping iteratively one
additional asset. Since the number of assets is finite the procedure takes $%
\beta \leq d$ steps. On the resulting set there are no one point
arbitrages
so that $\Omega _{\ast }=(\cup _{i=1}^{\beta }B^{j})^{C}\in \mathcal{B}%
(\Omega )$.
\end{remark}

\subsection{On the key Proposition \protect\ref{supsup}}

\begin{remark}
\label{remarkutile} We point out at this stage that $\Omega _{\ast }$ is not
only analytic but also it belongs to $\mathcal{F}_{T}$ where $\mathcal{F}%
_{T} $ is the universal completion of $\sigma (S_{t}\mid t\leq T)$. Indeed $%
\Omega _{\ast }\subseteq S_{0:T}^{-1}(S_{0:T}(\Omega _{\ast }))$. Moreover
for any $\omega _{1}\in S_{0:T}^{-1}(S_{0:T}(\Omega _{\ast }))$ there exists
$\omega _{2}\in \Omega _{\ast }$ such that $S_{0:T}(\omega
_{1})=S_{0:T}(\omega _{2})$. Therefore for any $Q\in \mathcal{M}_{f}$ such
that $Q(\{\omega _{2}\})>0$ and $Q(\{\omega _{1}\})=0$, the measure $\tilde{Q%
}$ such that $\tilde{Q}(\{\omega _{1}\}):=Q(\{\omega _{2}\})$, $\tilde{Q}%
(\{\omega _{2}\}):=0$ and $\tilde{Q}=Q$ elsewhere is a martingale measure.
Necessarily $\omega _{1}\in \Omega _{\ast }$.
\end{remark}

In the proof of Proposition \ref{supsup} we will make use of the following
simple fact: set $\Omega _{\ast }^{T}:=\Omega _{\ast }\in \mathcal{F}_{T}$
then by backward recursion we have
\begin{equation*}
\Omega _{\ast }^{t}:=S_{0:t}^{-1}(S_{0:t}(\Omega _{\ast }^{t+1}))\in
\mathcal{F}_{t},\quad \Omega _{\ast }^{t+1}\subseteq \Omega _{\ast }^{t}%
\text{ for any }t=0,\ldots ,T-1,\text{ and}\quad \Omega _{\ast
}=\bigcap_{t=1}^{T}\Omega _{\ast }^{t}\text{.}
\end{equation*}%
Notice that $\Omega _{\ast }^{t}$ can be interpreted as the $\mathcal{F}_{t}$%
-measurable projection of $\Omega _{\ast }$ since $\Omega _{\ast
}^{t}=S_{0:t}^{-1}(S_{0:t}(\Omega _{\ast }))$. \newline

%\begin{notation} We denote by $\Sigma _{t-1}^{\omega }=\{%
%\widetilde{\omega }\in \Omega \mid S_{0:t-1}(\widetilde{\omega }%
%)=S_{0:t-1}(\omega )\}$, the level set of the trajectory $\omega $
%up to time $t-1$.
%\end{notation}

We also recall that the condition No one point arbitrage holds true on $%
\Omega_*$. If indeed there exists $H\in\mathcal{H}$ such that $(H\cdot
S)_T\geq 0$ with $(H\cdot S)_T(\omega)>0$ for some $\omega\in\Omega_*$, then
any measure $P$ such that $P(\omega)>0$ cannot be a martingale measure,
which contradicts \eqref{omega*}.

\subsubsection{Proof of Proposition \protect\ref{supsup}}

We show, in several steps, that $\pi _{\ast }(g)=\sup_{Q\in \mathcal{M}%
_{f}}\pi _{Q}(g)$ where $\pi _{\ast }$ and $\pi _{Q}$ are defined in (\ref%
{pi}) and (\ref{piQ}) and $g\in \mathcal{L}(\Omega ,\mathcal{F})$.

\textbf{Step 1:} The first step is to construct, for any $1\leq t\leq T$, an
$\mathcal{F}_{t-1}$-measurable random set $R_{t,X,D}\subseteq \mathbb{R}%
^{d+1} $ whose interpretation is the following: if $\omega$ occurs, any $%
H^1,\ldots H^d,H^{d+1}\in R_{t,X,D}(\omega)$ represents a strategy at time $%
t-1$ that allows to super-hedge the random variable $X$ at time $t$, for any
trajectory in $D\subseteq \Omega$. Here $H^{d+1}$ represents the investment
in the non-risky asset. Note that we need to incorporate the additional
feature given by the choice of the set $D$ since we want to super-hedge the
random variable $g$ only on $\Omega_*\subseteq \Omega$. \bigskip

Recall $\Delta S_{t}=S_{t}-S_{t-1}$. Consider, for an arbitrary $1\leq t\leq
T$, $D\in \mathcal{F}_{t}$ and $X\in \mathcal{L}(\Omega ,\mathcal{F}),$ the
multifunction
\begin{equation*}
\psi _{t,X,D}:\omega \mapsto \left\{ \left[ \Delta S_{t}(\widetilde{\omega }%
);1;X(\widetilde{\omega })\right] \mathbf{1}_{D}\mid \widetilde{\omega }\in
\Sigma _{t-1}^{\omega }\right\} \subseteq \mathbb{R}^{d+2}
\end{equation*}%
where $\left[ \Delta S_{t};1;X\right] \mathbf{1}_{D}=\left[ \Delta S_{t}^{1}%
\mathbf{1}_{D},\ldots ,\Delta S_{t}^{d}\mathbf{1}_{D},\mathbf{1}_{D},X%
\mathbf{1}_{D}\right] $ and $\Sigma _{t-1}^{\omega }$ is the level set of
the trajectory $\omega $ up to time $t-1$ i.e. $\Sigma _{t-1}^{\omega }=\{%
\widetilde{\omega }\in \Omega \mid S_{0:t-1}(\widetilde{\omega }%
)=S_{0:t-1}(\omega )\}$. We show that $\psi _{t,X,D}$ is an $\mathcal{F}%
_{t-1}$-measurable multifunction. Indeed we need to show that, for any open
set $O\subseteq \mathbb{R}^{d}\times \mathbb{R}^{2}$,
\begin{equation*}
\{\omega \in \Omega \mid \psi _{t,X,D}(\omega )\cap O\neq \varnothing
\}=S_{0:t-1}^{-1}\left( S_{0:t-1}\left( B\right) \right) \in \mathcal{F}%
_{t-1}\quad \text{where }B=(\left[ \Delta S_{t};1;X\right] \mathbf{1}%
_{D})^{-1}(O).
\end{equation*}%
First $\left[ \Delta S_{t},1,X\right] \mathbf{1}_{D}$ is an $\mathcal{F}$%
-measurable random vector then $B\in \mathcal{F}$. Second $S_{u}$ is a Borel
measurable function for any $0\leq u\leq t-1$ so that we have, as a
consequence of Theorem III.18 in \cite{DM82}, that $S_{0:t-1}(B)$ belongs to
the sigma-algebra generated by the analytic sets in $Mat(d\times t;\mathbb{R}%
)$ endowed with its Borel sigma-algebra. Applying now Theorem III.11 in \cite%
{DM82} we deduce that $S_{0:t-1}^{-1}(S_{0:t-1}(B))\in \mathcal{F}_{t-1}$
and hence the desired measurability for $\psi _{t,X,D}$.\newline
By preservation of measurability (see \cite{R} for instance) the
multifunction
\begin{equation*}
\psi _{t,X,D}^{\ast }(\omega ):=\left\{ H\in \mathbb{R}^{d+2}\mid H\cdot
y\leq 0\quad \forall y\in \psi _{t,X,D}(\omega )\right\}
\end{equation*}%
is also $\mathcal{F}_{t-1}$-measurable and thus, the same holds true for $%
-\psi _{t,X,D}^{\ast }\cap \{\mathbb{R}^{d+1}\times \{-1\}\}$. The
projection on the first $d+1$ components, $R_{t,X,D}:=\Pi _{x_{1},\ldots
,x_{d+1}}(-\psi _{t,X,D}^{\ast }\cap \{\mathbb{R}^{d+1}\times \{-1\}\})$,
provides the building blocks for the super-replicating strategy for $X$. By
the previous construction we have indeed that
\begin{equation}
R_{t,X,D}(\omega )=\left\{ H\in \mathbb{R}^{d+1}\mid H^{d+1}\mathbf{1}%
_{D}+\sum_{i=1}^{d}H^{i}\Delta S_{t}^{i}(\widetilde{\omega })\mathbf{1}%
_{D}\geq X(\widetilde{\omega })\mathbf{1}_{D}\quad \forall \widetilde{\omega
}\in \Sigma _{t-1}^{\omega }\right\}   \label{Replicating}
\end{equation}%
Notice that if $D\cap \Sigma _{t-1}^{\omega }=\varnothing $ then $%
R_{t,X,D}(\omega )=\mathbb{R}^{d+1}$. Note also that $R_{t,X,D}$ is, by
construction, a closed set.

Denote by $\Pi _{x_{d+1}}(R_{t,X,D})$ the projection on the $(d+1)$-th
component, which is a random interval in $\mathbb{R}$ with possible values $%
\{\varnothing \},\{\mathbb{R}\}$. Observe now that the projection is
continuous and that the infimum of a real-valued random set $A$ preserve the
measurability since
\begin{equation*}
\left\{ \omega \in \Omega \mid \inf \{a\mid a\in A(\omega )\}<y\right\}
=\left\{ \omega \in \Omega \mid A(\omega )\cap (-\infty ,y)\neq \varnothing
\right\}
\end{equation*}%
Conclude, therefore, that $X_{t-1}:=\inf \Pi _{x_{d+1}}(R_{t,X,D})$ is an $%
\mathcal{F}_{t-1}$-measurable function with values in $\mathbb{R}\cup \{\pm
\infty \}$.

\textbf{Step 2}. We prove that for every $\omega \in \{|X_{t-1}|<\infty \}$
the infimum in $X_{t-1}$ is actually a minimum. To this aim fix $\omega \in
\{|X_{t-1}|<\infty \}$ and notice that there might exist $L\in \mathbb{R}%
^{d}\setminus \{0\}$ such that $L\cdot \Delta S_{t}=0$ on $\Sigma
_{t-1}^{\omega }\cap \Omega _{\ast }^{t}$, meaning that some assets are
redundant on this level set. We can reduce the number of assets by selecting
$i_{1},\ldots ,i_{k}\in (1,...,d)$ such that $l_{1}\Delta
S_{t}^{i_{1}}+\ldots +l_{k}\Delta S_{t}^{i_{k}}=0$ implies $l_{j}=0$ for
every $j=1,\ldots ,k$. Consider the closed set
\begin{equation*}
\widetilde{R}(\omega )=\left\{ H\in R_{t,X,D}(\omega )\mid H^{i_{j}}=0\text{
for every }j=1,\dots ,k\right\}
\end{equation*}%
and observe that
\begin{eqnarray*}
X_{t-1}(\omega ) &=&\inf \Pi _{x_{d+1}}\left( R_{t,X,D}(\omega )\right)
=\inf \Pi _{x_{d+1}}(\widetilde{R}(\omega )) \\
&=&\inf \Pi _{x_{d+1}}\left( \widetilde{R}(\omega )\cap \left\{ \mathbb{R}%
^{d}\times \lbrack X_{t-1}(\omega ),X_{t-1}(\omega )+1]\right\} \right) .
\end{eqnarray*}

The set $Ko(\omega ):=\widetilde{R}(\omega )\cap \left\{ \mathbb{R}%
^{d}\times \lbrack X_{t-1}(\omega ),X_{t-1}(\omega )+1]\right\} $ is closed
being the intersection of closed sets. We claim that $Ko(\omega )$ is
bounded. By contradiction, suppose it is unbounded. Let $\hat{H}%
_{n}=(H_{n},H_{n}^{d+1})\in Ko(\omega )\subset \mathbb{R}^{d}\times \mathbb{R%
}$, such that $\Vert H_{n}\Vert \rightarrow +\infty $. By definition $%
H_{n}^{i_{j}}=0$ for every $j=1,\dots ,k$ and $H_{n}^{d+1}$ is bounded by $%
X_{t-1}(\omega )+1$. For any $\widetilde{\omega }\in D\cap \Sigma
_{t-1}^{\omega }$ and any $n$ we have
\begin{equation*}
\dfrac{X_{t-1}(\omega )+1}{\Vert H_{n}\Vert }+\dfrac{H_{n}}{\Vert H_{n}\Vert
}\cdot \Delta S_{t}(\widetilde{\omega })\geq \dfrac{X_{t}(\widetilde{\omega }%
)}{\Vert H_{n}\Vert }.
\end{equation*}%
Since $\frac{H_{n}}{\Vert H_{n}\Vert }$ lies on the unit sphere of $\mathbb{R%
}^{d}$, we can extract a subsequence converging to $H^{\ast }$ with $\Vert
H^{\ast }\Vert =1$. Therefore passing to the limit over this subsequence we
have $H^{\ast }\cdot \Delta S_{t}(\widetilde{\omega })\geq 0$ for every $%
\widetilde{\omega }\in D\cap \Sigma _{t-1}^{\omega }$. From No one point
arbitrage condition we deduce $H^{\ast }\cdot \Delta S_{t}=0$ on $D\cap
\Sigma _{t-1}^{\omega }$. Since $H_{n}\in Ko(\omega )$ then $(H^{\ast
})^{i_{j}}=0$ on the redundant assets and thus $H^{\ast }=0$ which is a
contradiction. \newline
The set $Ko(\omega )$ is closed and bounded in $\mathbb{R}^{d+1}$, hence
compact. From the continuity of the projection $\Pi _{x_{d+1}}(Ko(\omega ))$
is compact, so that the infimum is attained.

\bigskip

\textbf{Step 3:} We now provide a backward procedure which yields the
super-replication price and the corresponding optimal strategy. By classical
arguments, when we fix a reference probability $Q\in \mathcal{M}_{f}$ this
procedure yields two processes $X_{t}(Q)$ and $H_{t}(Q)$ such that
\begin{equation}
g\leq \sum_{u=t+1}^{T}H_{u}(Q)\cdot \Delta
S_{u}+X_{t}(Q)=\sum_{t=1}^{T}H_{t}(Q)\cdot \Delta S_{t}+X_{0}(Q)\quad Q-%
\text{a.s.}  \label{splitQ}
\end{equation}%
where $X_{t}(Q)$ represents the minimum amount of cash that we need at time $%
t$ in order to super-hedge $g$ in the $Q$-a.s. sense. Recall that from $%
NA(Q) $ we necessarily have $X_{t}(Q)>-\infty $ on supp$(Q)$. With no loss
of generality set $X_{t}(Q)(\omega )=-\infty $ for any $\omega \notin \text{%
supp}(Q)$. Now we prove the pathwise counterpart of \eqref{splitQ}:

\bigskip

Set $X_{T}:=g$ and $D_{T}:=\Omega _{\ast }$ which belongs to $\mathcal{F}%
_{T} $ by Remark \ref{remarkutile} and consider first the random set $%
R_{T,X_T,D_T}$. The random variable $X_{T-1}:=\inf \Pi
_{x_{d+1}}(R_{T,X_{T},D_{T}})$ represents the minimum amount of cash that we
need at time $T-1$ in order to super-hedge $g$ on $\Omega_*$. $X_{T-1}$ is
therefore the $\mathcal{F}_{T-1}$-measurable random variable that needs to
be super-replicated at time $T-2$.\newline
For $t=T-1,\ldots ,0$ we indeed iterate the procedure by taking $X_{t}:=\inf
\Pi _{x_{d+1}}(R_{t+1,X_{t+1},D_{t+1}})$, $%
D_{t}=S_{0:t}^{-1}(S_{0:t}(D_{t+1}))\in \mathcal{F}_{t}$ and the random set $%
R_{t+1,X_{t+1},D_{t+1}}$ as defined before. We again have that $X_{t}$ is an
$\mathcal{F}_{t}$-measurable function with values in $\mathbb{R}\cup \{\pm
\infty \}$. \newline

This backward procedure yields the super-hedging price $X_{0}$ on $\Omega
_{\ast }$ but also provide the corresponding cheapest portfolio as follows:
note first that for every $\omega \in \Omega _{\ast }$, $X_{t}(\omega
)>-\infty $. If this is not the case there exists a sequence $%
(H_{n},x_{n})_{n\in \mathbb{N}}\in \mathbb{R}^{d}\times \mathbb{R}$ such
that $x_{n}\downarrow -\infty $, $x_{n}+H_{n}\Delta S_{t+1}(\widetilde{%
\omega })\geq X_{t+1}(\widetilde{\omega })$ for every $\widetilde{\omega }%
\in D_{t+1}\cap \Sigma _{t}^{\omega }$ and hence $Q$-a.s. for every $Q\in
\mathcal{M}_{f}$ such that $Q(\Sigma _{t}^{\omega })>0$. This would lead to
a contradiction with $X_{t}(Q)>-\infty $. From now on we therefore assume
that $X_{t}(\omega )>-\infty $. In the case $X_{t}(\omega )<\infty $ for
every $t=0,\ldots ,T-1$, Step 2 provides that $X_{t}$ is actually a minimum.
The $\mathcal{F}_{t}$-measurable multifunction given by $\Pi _{x_{1},\ldots
,x_{d}}(R_{t+1,X_{t+1},D_{t+1}}\cap \left\{ \mathbb{R}^{d}\times
X_{t}\right\} )$ is therefore non-empty for every $t=0,\ldots ,T-1$ and thus
admits a measurable selector $H_{t+1}$. The strategy $H_{1},\ldots ,H_{T}$
satisfy the inequalities
\begin{eqnarray*}
g\leq H_{T}\cdot \Delta S_{T}+X_{T-1} &&\text{ on }D_{T} \\
X_{T-1}\leq H_{T-1}\cdot \Delta S_{T-1}+X_{T-2} &&\text{ on }D_{T-1} \\
&\ldots & \\
X_{1}\leq H_{1}\cdot \Delta S_{1}+X_{0} &&\text{ on }D_{1}
\end{eqnarray*}%
and it represents a super-hedge on $\Omega _{\ast }=\bigcap_{t=1}^{T}D_{t}$
as
\begin{equation}
g\leq H_{T}\cdot \Delta S_{T}+X_{T-1}\leq \sum_{t=T-1}^{T}H_{t}\cdot \Delta
S_{t}+X_{T-2}\leq \ldots \leq \sum_{t=1}^{T}H_{t}\cdot \Delta S_{t}+X_{0}
\label{split}
\end{equation}%
holds true for any $\omega \in \Omega _{\ast }$. When instead $X_{t}(\omega
)=\infty $ for some $\omega \in \Omega _{\ast }$ and for some $t\geq 0$ then
by simply taking $X_{u}\equiv \infty $ and $H_{u}$ arbitrary for every $%
u\leq t$, the inequality \eqref{split} is trivially satisfied.

\bigskip

\textbf{Step 4:} In order to prove (\ref{piSup}) we recursively show that $%
X_{t}(\omega )=\sup_{Q\in \mathcal{M}_{f}}X_{t}(Q)(\omega )$ for any $\omega
\in \Omega _{\ast }$ which, in particular, implies $X_{0}=\sup_{Q\in
\mathcal{M}_{f}}X_{0}(Q)$. Obviously $X_{t}(\omega )\geq X_{t}(Q)(\omega )$
for any $\omega \in \Omega _{\ast }$ so that $X_{t}\geq \sup_{Q\in \mathcal{M%
}_{f}}X_{t}(Q)$. Thus, we need only to prove the reverse inequality.

\bigskip

For $t=T$ the claim is obvious: $X_{T}=g$. By backward recursion suppose now
it holds true for any $u$ with $t+1\leq u\leq T$ i.e. $X_{u}(\omega
)=\sup_{Q\in \mathcal{M}_{f}}X_{u}(Q)(\omega )$ for any $\omega \in \Omega
_{\ast }$.\newline
From the recursive hypothesis in order to find a super-replication strategy
with the same price for any $Q\in \mathcal{M}_{f}$ we need to
super-replicate $X_{t+1}$. We fix a level set $\Sigma _{t}^{\omega }$ and
recall that $X_{t}$ is $\mathcal{F}_{t}$-measurable, hence it is constant on
$\Sigma _{t}^{\omega }$. We first treat two trivial cases:

\begin{itemize}
\item If $X_{t+1}(\omega )=\infty $ for some $\omega \in \Omega _{\ast }$
then the claim is not super-replicable at a finite cost hence the thesis
follows with $X_{0}=\sup_{Q\in \mathcal{M}_{f}}X_{0}(Q)=\infty $.

\item If $\Sigma _{t}^{\omega }\cap \Omega _{\ast }^{t+1}=\varnothing $ we
have two consequences: $\Sigma _{t}^{\omega }$ is an $\mathcal{M}_{f}$-polar
set, hence by assumption, $X_{t}(Q)=-\infty $ on $\Sigma _{t}^{\omega }$,
for any $Q\in \mathcal{M}_{f}$. Moreover, as explained after equation %
\eqref{Replicating}, $\Pi _{x_{d+1}}(R_{t+1,X_{t+1},D_{t+1}})=\mathbb{R}$ so
that $X_{t}(\omega )=-\infty $ and the desired equality follows.
\end{itemize}

From now on we therefore assume $X_{t+1}<\infty $ and $\Sigma _{t}^{\omega
}\cap \Omega _{\ast }^{t+1}\neq \varnothing $. Define, for any $y\in \mathbb{%
R}$, the set
\begin{equation*}
\Gamma _{y}:=co\left( conv\left\{ \left[ \Delta S_{t+1}(\widetilde{\omega }%
);y-X_{t+1}(\widetilde{\omega })\right] \mid \widetilde{\omega }\in \Sigma
_{t}^{\omega }\cap \Omega _{\ast }^{t+1}\right\} \right)
\end{equation*}%
We claim that
\begin{equation}
0\in int(\Gamma _{y})\Longrightarrow X_{t}>y  \label{interiorY}
\end{equation}%
Indeed from $0\in int(\Gamma _{y})$ there is no non zero $(H,h)\in \mathbb{R}%
^{d}\times \mathbb{R}$ , such that either $h(y-X_{t+1})+H\cdot \Delta
S_{t+1}\geq 0$ or $h(y-X_{t+1})+H\cdot \Delta S_{t+1}\leq 0$ on $\Sigma
_{t}^{\omega }\cap \Omega _{\ast }^{t+1}$. In particular there is no $H\in
\mathbb{R}^{d}$ such that
\begin{equation}  \label{yNotX}
y+H\cdot \Delta S_{t+1}\geq X_{t+1}\text{ on } \Sigma _{t}^{\omega }\cap
\Omega _{\ast }^{t+1}
\end{equation}
Recalling that, by definition, $X_{t}$ is the infimum of real numbers for
which \eqref{yNotX} is satisfied, we have $X_t\geq y$. Since, from Step 2, $%
X_{t}$, when finite, is actually a minimum, we have $X_t>y$ and %
\eqref{interiorY} follows.

\begin{enumerate}
\item[Premise:] As in Step 1, we may suppose, without loss of generality,
that if for some $H\in \mathbb{R}^{d}$, $H\cdot \Delta S_{t+1}=0$ on $\Sigma
_{t}^{\omega }\cap \Omega _{\ast }^{t+1}$ then $H=0$. In fact if this is not
the case we can reduce, with an analogous procedure, the number of assets
needed for super-replication on the level set .

We now distinguish two cases.

\item[\textbf{Case 1:}] Suppose there exist $(H,h,\alpha )\in \mathbb{R}%
^{d+2}$ with $(H,h,\alpha )\neq 0$ such that $h(y-X_{t+1})+H\cdot \Delta
S_{t+1}=\alpha $ on $\Sigma _{t}^{\omega }\cap \Omega _{\ast }^{t+1}$. We
claim that $h\neq 0$. Indeed, if $h=0$ then $\alpha \neq 0,$ since $H\cdot
\Delta S_{t+1}=0$ implies $(H,h,\alpha )=0$. However, $\alpha \neq 0$
implies $H\cdot \Delta S_{t+1}=\alpha $ on $\Sigma _{t}^{\omega }\cap \Omega
_{\ast }^{t+1}$ which would yield a trivial one point arbitrage on $\Omega
_{\ast }$, hence a contradiction. \newline
Since $h\neq 0$ we have $y-\frac{\alpha }{h}+\frac{H}{h}\cdot \Delta
S_{t+1}=X_{t+1}$ on $\Sigma _{t}^{\omega }\cap \Omega _{\ast }^{t+1}$: this
means that $X_{t}$ from Step 3 coincides with $y-\frac{\alpha }{h}$ and $%
X_{t+1}$ is replicable implementing the strategy $\bar{H}:=\frac{H}{h}$ in
the risky assets and $X_t=y-\frac{\alpha }{h}$ in the non-risky asset. If
now for some $Q\in \mathcal{M}_{f}$ such that $Q(\Sigma _{t}^{\omega })>0$,
we have the existence of $x\leq X_{t}$ and $H_{x}\in \mathbb{R}^{d}$ such
that $x+H_{x}\cdot \Delta S_{t+1}\geq X_{t+1}$ $Q$-a.s. then $x-X_{t}+(H_{x}-%
\bar{H})\Delta S_{t+1}\geq 0$ $Q$-a.s. hence, since $NA(Q)$ holds true, $%
x\geq X_{t}$. Therefore $X_{t}=X_{t}(Q)$ on $\Sigma _{t-1}^{\omega }$.

\item[\textbf{Case 2:}] If a triplet $(H,h,\alpha )\in \mathbb{R}^{d+2}$
such as in Case 1 does not exist then we define

\begin{equation*}
\bar{y}=\sup \left\{ y\in \mathbb{R}\mid \exists \,H\in \mathbb{R}%
^{d}:\;y+H\cdot \Delta S_{t+1}\leq X_{t+1}\text{ on }\Sigma _{t}^{\omega
}\cap \Omega _{\ast }^{t+1}\right\} .
\end{equation*}%
Obviously $\bar{y}<X_{t}$ otherwise we are back to Case 1. For every $%
0<\varepsilon <X_{t}-\bar{y}$ and for every $H\in \mathbb{R}^{d}$ neither $%
X_{t}-\varepsilon +H\Delta S_{t+1}\geq X_{t+1}$ nor $X_{t}-\varepsilon
+H\Delta S_{t+1}\leq X_{t+1}$ holds true on $\Sigma _{t}^{\omega }\cap
\Omega _{\ast }^{t+1}$. Moreover if there exists $h\in \mathbb{R}$ such that
$h(X_{t}-\varepsilon -X_{t+1})+H\Delta S_{t+1}\geq 0$ (or $%
h(X_{t}-\varepsilon -X_{t+1})+H\Delta S_{t+1}\leq 0$) on $\Sigma
_{t}^{\omega }\cap \Omega _{\ast }^{t+1}$ necessarily $h$ would be 0
(otherwise simply divide by $h$). In such a case $H\Delta S_{t+1}\geq 0$ (or
$H\Delta S_{t+1}\leq 0$) on $\Sigma _{t}^{\omega }\cap \Omega _{\ast }^{t+1}$
and by absence of one point arbitrage we get $H\Delta S_{t+1}=0$ and hence $%
H=0$. For this reason neither $h(X_{t}-\varepsilon -X_{t+1})+H\Delta
S_{t+1}\geq 0$ nor $h(X_{t}-\varepsilon -X_{t+1})+H\Delta S_{t+1}\leq 0$ for
any $(H,h)\in \mathbb{R}^{d+1}\setminus \{0\}$ so that $0\in int\Gamma
_{X_{t}-\varepsilon }$. \newline
Take $\{\omega _{i}\}_{i=1}^{k}\subset \Sigma _{t}^{\omega }\cap \Omega
_{\ast }$ (with $k\leq d$) such that $\{\left[ \Delta S_{t+1}(\omega
_{i});X_{t}-\varepsilon -X_{t+1}(\omega _{i})\right] \mid i=1,\ldots ,k\}$
are linearly independent and generates the same linear space in $\mathbb{R}%
^{d+1}$ as $\Gamma _{X_{t}-\varepsilon }$. By Proposition \ref{LemmaNOpolar}%
, and the convexity of the set of martingale measures, there exists $Q\in
\mathcal{M}_{f}$ such that $Q(\{\omega _{i}\})>0$ for any $i=1,\ldots ,k$.
For such a $Q$ we get
\begin{equation*}
\Gamma _{X_{t}-\varepsilon }=co\left( conv\{\left[ \Delta S_{t+1}(\widetilde{%
\omega });X_{t}-\varepsilon -X_{t+1}(\widetilde{\omega })\right] \mid
\widetilde{\omega }\in supp(Q)\cap \Sigma _{t}^{\omega }\}\right)
\end{equation*}%
so that, from $0\in int\Gamma _{X_{t}-\varepsilon }$, there exists no $%
H(Q)\in \mathbb{R}^{d}$ such that $X_{t}-\varepsilon +H(Q)\cdot \Delta
S_{t+1}\geq X_{t+1}$ $Q$-a.s. We can conclude that $X_{t}\geq \sup_{Q\in
\mathcal{M}_{f}}X_{t}(Q)\geq X_{t}-\varepsilon $. Letting $\varepsilon
\downarrow 0$ we get $\sup_{Q\in \mathcal{M}_{f}}X_{t}(Q)=X_{t}$ as desired.
\end{enumerate}

\textbf{Step 5:} finally we prove (\ref{intersection}). Notice that $%
\mathcal{C}\subseteq \bigcap_{Q\in \mathcal{M}_{f}}\mathcal{C}(Q)$. Moreover
if $g\in \bigcap_{Q\in \mathcal{M}_{f}}\mathcal{C}(Q)$ then \eqref{splitQ}
holds with $X_{0}(Q)\leq 0$ for every $Q\in \mathcal{M}_{f}$. Therefore also
in Equation \eqref{split} we have $X_{0}=\sup_{Q\in \mathcal{M}%
_{f}}X_{0}(Q)\leq 0$ and $g\leq \sum_{t=1}^{T}H_{t}\cdot \Delta S_{t}$ on $%
\Omega _{\ast }$ i.e. $g\in \mathcal{C}$.

\bigskip

\begin{remark}
Note that the proof of Proposition \ref{supsup} relies only on the fact that
$\Omega_{\ast}$ is an analyitc set and that $(\Omega_{\ast})^C$ is the
maximal polar set for the class of finite support martingale measure. Given $%
\hat{\Omega}\subseteq \Omega$ an analytic subset of $(\Omega,\mathcal{F})$,
from Proposition \ref{propAn} it also follows that
\begin{equation*}
\hat{\mathcal{C}} =\bigcap_{\{Q\in \mathcal{M}_{f}\mid Q(\hat{\Omega})=1\}}%
\mathcal{C}(Q)
\end{equation*}
where $\hat{\mathcal{C}}:= \{f\in \mathcal{L}(\Omega ,\mathcal{F})\mid f\leq
k\text{ on }\hat{\Omega}_{\ast }\text{ for some }k\in \mathcal{K}\}$ and $%
\hat{\Omega}_{\ast }$ as in \eqref{omegaHat*}.
\end{remark}

\subsection{Proof of Theorem \protect\ref{superHO}\label{secOption}}

Recall that $\pi _{\Phi }$ is defined in (\ref{piPhi}) and $\mathcal{M}
_{\Phi }$ in (\ref{Mphi}). Set
\begin{equation*}
\widetilde{\pi }_{\Phi }(g):=\inf \left\{ x\in \mathbb{R}\mid \exists H\in
\mathcal{H}\text{ such that }x+(H\cdot S)_{T}(\omega )\geq g(\omega )\
\forall \omega \in \Omega _{\Phi }\right\} .
\end{equation*}

\begin{lemma}
\label{lemmag}Let $g:\Omega \mapsto \mathbb{R}$ and $\phi ^{j}:\Omega
\mapsto \mathbb{R}$, $j=1,...,k,$ be $\mathcal{F}$-measurable random
variables. Then
\begin{equation*}
\pi _{\Phi }(g)=\inf_{h\in \mathbb{R}^{k}}\widetilde{\pi }_{\Phi }(g-h\Phi ).
\end{equation*}
\end{lemma}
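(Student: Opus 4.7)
The plan is to show this is essentially a tautology obtained by reparameterizing the optimization, separating out the static hedging parameter $h$ from the dynamic component $(x,H)$. The key observation is that for any fixed $h \in \mathbb{R}^k$, the condition
\[
x + (H \cdot S)_T(\omega) + h\Phi(\omega) \geq g(\omega) \quad \forall\, \omega \in \Omega_\Phi
\]
can be rewritten as
\[
x + (H \cdot S)_T(\omega) \geq (g - h\Phi)(\omega) \quad \forall\, \omega \in \Omega_\Phi,
\]
which is precisely the condition defining $\widetilde{\pi}_\Phi(g - h\Phi)$.

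First I would simply unpack the definition of $\pi_\Phi(g)$ and split the joint infimum over $(H, h) \in \mathcal{H} \times \mathbb{R}^k$ and $x \in \mathbb{R}$ as a double infimum, first over $h \in \mathbb{R}^k$ and then, for each such $h$, over the pairs $(x, H) \in \mathbb{R} \times \mathcal{H}$. This is the standard manipulation $\inf_{(a,b) \in A \times B} F(a,b) = \inf_{b \in B} \inf_{a \in A} F(a,b)$, valid since $\mathcal{H} \times \mathbb{R}^k$ is a product set and the constraint is of the product form $\{(x, H, h) \mid \text{condition}(x, H, h)\}$.

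Then, applying the rewriting of the pathwise inequality noted above, the inner infimum over $(x, H)$ becomes exactly $\widetilde{\pi}_\Phi(g - h\Phi)$ because $h\Phi$ is $\mathcal{F}$-measurable and hence $g - h\Phi$ is an admissible payoff in the definition of $\widetilde{\pi}_\Phi$. Taking the outer infimum over $h \in \mathbb{R}^k$ yields the desired identity
\[
\pi_\Phi(g) = \inf_{h \in \mathbb{R}^k} \widetilde{\pi}_\Phi(g - h\Phi).
\]

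There is no real obstacle here: the lemma is a definitional reformulation that prepares the ground for invoking the pure dynamic-strategy duality (Proposition \ref{supsup} / Theorem \ref{superH}) applied to the modified claim $g - h\Phi$, after which an outer optimization over $h$ will produce the options-adjusted dual $\sup_{Q \in \mathcal{M}_\Phi} E_Q[g]$. The only mild care needed is to ensure conventions on $\pm \infty$ are consistent when $g - h\Phi$ fails to be integrable, but since the inequality and infimum are defined pathwise and in $\mathbb{R} \cup \{\pm\infty\}$, the manipulation goes through unchanged.
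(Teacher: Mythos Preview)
Your proposal is correct and is essentially the same approach as the paper's: both recognize that fixing $h$ reduces the semi-static superhedging constraint to the purely dynamic one for the modified claim $g-h\Phi$. The only cosmetic difference is that the paper presents the argument as two inequalities (the direction $\pi_\Phi(g)\leq \widetilde{\pi}_\Phi(g-h\Phi)$ for each $h$, and the reverse by contradiction), whereas you do the equivalent one-line splitting $\inf_{(x,H,h)}=\inf_h\inf_{(x,H)}$; the content is identical.
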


\begin{proof}
For every $h\in \mathbb{R}^{k}$ we have $\pi _{\Phi }(g)\leq \widetilde{\pi }
_{\Phi }(g-h\Phi )$ so that $\pi _{\Phi }(g)\leq \inf_{h\in \mathbb{R}^{k}}
\widetilde{\pi }_{\Phi }(g-h\Phi )$. By contradiction assume $\pi _{\Phi
}(g)<\inf_{h\in \mathbb{R}^{k}}\widetilde{\pi }_{\Phi }(g-h\Phi )$, then
there exist $(\bar{x},\bar{h},\bar{H})\in (\mathbb{R},\mathbb{R}^{k},
\mathcal{H})$ such that
\begin{eqnarray*}
&&\bar{x}<\inf_{h\in \mathbb{R}^{k}}\widetilde{\pi }_{\Phi }(g-h\Phi )\quad
\text{ and } \\
&&\bar{x}+(\bar{H}\cdot S)_{T}(\omega )+\bar{h}\Phi (\omega )\geq g(\omega )
\text{ for all }\omega \in \Omega _{\Phi }
\end{eqnarray*}
Clearly we have a contradiction since
\begin{equation*}
\bar{x}<\widetilde{\pi }_{\Phi }(g-\bar{h}\Phi )=\inf \left\{ x\in \mathbb{R}
\mid \exists H\in \mathcal{H}\text{ s. t. }x+(H\cdot S)_{T}(\omega )\geq
g(\omega )-\bar{h}\Phi (\omega )\ \forall \omega \in \Omega _{\Phi }\right\}
\leq \bar{x}.
\end{equation*}
\end{proof}

\begin{proof}[Proof of Theorem \protect\ref{superHO}]
Since also $\Omega _{\Phi }$ is analytic (Proposition \ref{propAn}), by
comparing the definition of $\Omega _{\Phi }$ in (\ref{omegaphi}) with (\ref%
{omega*f}), we may repeat step by step the same arguments used in the proof
of Theorem \ref{superH} and Proposition \ref{supsup} replacing $\Omega
_{\ast }$ with $\Omega _{\Phi }$. We then conclude that $\widetilde{\pi }
_{\Phi }(g)=\sup_{\{Q\in \mathcal{M}_{f}\mid supp(Q)\subseteq
\Omega_{\phi}\}}E_{Q}[g]$ for any $\mathcal{F}$-measurable random variable $g
$. From the hypothesis we also have $\widetilde{\pi } _{\Phi }(g)=\sup_{Q\in
\mathcal{M}_{\Phi }}E_{Q}[g]$. Since $E_{Q}[h\Phi ]=0$ for all $Q\in
\mathcal{M}_{\Phi }$ and $h\in \mathbb{R}^{k},$ for the $\mathcal{F}$%
-measurable random variable $g-h\Phi $ we have
\begin{equation*}
\widetilde{\pi }_{\Phi }(g-h\Phi )=\sup_{Q\in \mathcal{M}_{\Phi
}}E_{Q}[g-h\Phi ]=\sup_{Q\in \mathcal{M}_{\Phi }}E_{Q}[g],\text{ }\forall
h\in \mathbb{R}^{k}\text{.}
\end{equation*}
The Lemma \ref{lemmag}\ then implies: $\pi _{\Phi }(g)=\inf_{h\in \mathbb{R}
^{k}}\widetilde{\pi }_{\Phi }(g-h\Phi )=\sup_{Q\in \mathcal{M}_{\Phi
}}E_{Q}[g]$.
\end{proof}

\end{document}